\newtheorem{theorem}{Theorem}[section]
\newtheorem*{thm2}{Theorem}
\newtheorem{definition}[theorem]{Definition}
\newtheorem{lemma}[theorem]{Lemma}
\newcommand{\junk}[1]{}
\newcommand{\ignore}[1]{}
\newcommand{\E}{\mathbb{E}}
\newcommand{\eps}{\varepsilon}
\newcommand{\ol}[1]{\ensuremath{\overline{#1}}\xspace}
\newcommand{\wh}[1]{\ensuremath{\widehat{#1}}\xspace}
\newcommand{\mc}[1]{\ensuremath{\mathcal{#1}}\xspace}
\newcommand{\tn}[1]{\ensuremath{\textnormal{#1}}\xspace}
\newcommand{\poly}{\ensuremath{\textnormal{poly}}\xspace}
\newcommand{\LC}{{\sc LabelCover}\xspace}
\newcommand{\initOneLiners}{%
    \setlength{\itemsep}{0pt}
    \setlength{\parsep }{0pt}
    \setlength{\topsep }{0pt}
}
\def\showauthornotes{1}
\newcommand{\Authornote}[2]{{\sf\small\color{red}{[#1: #2]}}}
\newcommand{\Authornote}[2]{}
\begin{document}
\title{Hardness of Finding Independent Sets in $2$-Colorable \\
Hypergraphs and of Satisfiable CSPs}
\author{Rishi Saket\thanks{IBM India Research Lab, Bangalore, India.
Email : {\tt rissaket@in.ibm.com}. Work done in part as a
Post-Doctoral Researcher at IBM T.J. Watson Research Center, NY, USA.}}
\maketitle
\thispagestyle{empty}
\setcounter{page}{0}
\begin{abstract}
This work revisits the PCP Verifiers used in the works of H\aa
stad~\cite{Hastad}, Guruswami et al.~\cite{GHS},
Holmerin~\cite{Holmerin} and Guruswami~\cite{guruswami} 
for \emph{satisfiable}
{\sc Max-E$3$-SAT} and {\sc Max-E$k$-Set-Splitting}, and independent
set in $2$-colorable $4$-uniform
hypergraphs. We provide simpler and
more efficient PCP Verifiers to prove the following improved hardness
results: 

\smallskip
\noindent
Assuming that NP $\not\subseteq$ DTIME$(N^{O(\log\log N)})$,
\begin{itemize}
\item There is no polynomial time algorithm that, given an $n$-vertex
$2$-colorable $4$-uniform hypergraph, finds an independent set of
$\frac{n}{(\log n)^c}$ vertices, for some constant $c >
0$.

\item There is no polynomial time algorithm that satisfies
$\frac{7}{8} + \frac{1}{(\log n)^c}$ fraction of the clauses of a 
\emph{satisfiable} {\sc Max-E$3$-SAT} instance of size $n$, 
for some constant $c >
0$.

\item For any fixed $k \geq 4$, there is no polynomial time 
algorithm that finds a partition splitting 
$(1 - 2^{-k+1}) + \frac{1}{(\log n)^c}$ fraction of the
$k$-sets of a \emph{satisfiable} {\sc Max-E$k$-Set-Splitting}
instance of size $n$, for some constant $c > 0$.
\end{itemize}
Our hardness factor for independent set in $2$-colorable $4$-uniform
hypergraphs is an exponential improvement over the previous results
of Guruswami et
al.~\cite{GHS} and Holmerin~\cite{Holmerin}. 
Similarly, our inapproximability of $(\log n)^{-c}$ beyond the 
random assignment threshold for {\sc Max-E$3$-SAT} and 
{\sc Max-E$k$-Set-Splitting} is an exponential improvement over the
previous bounds proved in
\cite{Hastad}, \cite{Holmerin} and \cite{guruswami}.

The PCP Verifiers used in our results avoid the use of a
\emph{variable bias} parameter used in previous works, which leads to
the improved hardness thresholds in addition to simplifying the
analysis substantially. Apart from standard techniques from
Fourier Analysis, for the first mentioned result we use a mixing 
estimate of Markov Chains based
on \emph{uniform} reverse hypercontractivity over general product spaces
from the work of Mossel et al.~\cite{MOS}.

\end{abstract}

\newpage

\section{Introduction}
A $k$-uniform hypergraph consists of a set of \emph{vertices} and a
collection of \emph{hyperedges} where each hyperedge is a subset of
exactly $k$ vertices. A hypergraph is said to be \emph{$q$-colorable} if its
vertices can be colored with $q$ distinct colors such that no
hyperedge contains all vertices of the same color. A related notion is
that of an \emph{independent set}, which is a subset of vertices that
does not completely contain any hyperedge.  It is easy to see
that a $q$-colorable hypergraph has at least one independent set of
$q^{-1}$ fraction of vertices, i.e. \emph{relative} size. 

Computing the
minimum number $q$ -- the \emph{chromatic number}
-- of colors required to color a hypergraph 
is a very well studied optimization problem. 
There is a simple polynomial time algorithm to decide
whether a given graph ($k = 2$) can be colored using 
$q = 2$ colors, i.e. is bipartite. 
However, for $k \geq 3$ or $q \geq 3$, 
this problem is NP-hard. A
natural question in this context is how well can the chromatic number
be approximated.

The first strong inapproximability for hypergraph coloring was given
by Guruswami, H\aa stad and Sudan \cite{GHS} who showed that it is
NP-hard to color an $n$-vertex $2$-colorable $4$-uniform hypergraph 
using constantly many
colors, and quasi-NP-hard\footnote{For ease of presentation,
in this paper we exclusively use a \emph{stronger} notion of
\emph{quasi-NP-hardness}, i.e. a problem is quasi-NP-hard if it
admits a DTIME$(N^{O(\log \log
N)})$ reduction from {\sc $3$SAT}. This differs from the weaker
requirement of  DTIME$(N^{\tn{poly}(\log
N)})$ reductions.} to color it with 
$O\left((\log\log\log n)^{-1}\log\log
n\right)$ colors. They used a notion of 
\emph{covering complexity} combined with
techniques developed in the seminal work of H\aa stad~\cite{Hastad}. 
In particular, the Probabilistically Checkable Proof (PCP)
verifier of \cite{GHS} is identical to the one used in
\cite{Hastad} for the satisfiable {\sc Max-E$4$-Set-Splitting} problem. 
Subsequently, Holmerin~\cite{Holmerin}
used a more direct approach -- with the same PCP verifier -- 
to obtain a qualitatively stronger
result. Holmerin showed that,  given a $2$-colorable
$4$-uniform it is NP-hard to compute an independent set of relative
size $\delta$,
for any constant $\delta > 0$, and it is quasi-NP-hard to do so for
 $\delta = \Omega\left((\log\log n)^{-1}\log\log\log n\right)$. 

In this work we prove the following quantitatively stronger 
result with an
exponential improvement in the hardness factor.
\begin{theorem}\label{thm-main1}
Given an $n$-vertex $2$-colorable $4$-uniform hypergraph it is quasi-NP-hard to
find an independent set of relative size $\frac{1}{\left(\log n\right)^c}$ for
some constant $c > 0$.
\end{theorem}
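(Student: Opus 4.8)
The plan is to construct a PCP verifier in the spirit of \cite{Hastad,GHS,Holmerin} that reads four bits and accepts the ``not-all-equal'' predicate, but driven by the \emph{uniform} long code rather than a biased one, and then to read off a $4$-uniform hypergraph from it. First I would take a hard gap instance of $3$SAT (from the PCP theorem) and apply $u=\Theta(\llog N)$ rounds of parallel repetition, obtaining in time $N^{O(\llog N)}$ a \LC{} instance $\mc{L}$ with the projection property and label sets of size $R=(\log N)^{O(1)}$, which is either perfectly satisfiable or has value at most $\gamma:=(\log N)^{-\Omega(1)}$. Here $u$ is taken as large as possible subject to the long-code tables below --- which have $2^{R}$ entries --- keeping the final instance of quasi-polynomial size $n=N^{O(\llog N)}$; this is precisely what allows $\gamma$, and hence the eventual hardness factor, to be as small as $(\log N)^{-\Omega(1)}$, which is $(\log n)^{-\Omega(1)}$ since $\log n=(\log N)^{O(1)}$.

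For the hypergraph: for every vertex $w$ of $\mc{L}$ put down a block of $2^{R_w}$ vertices, thought of as the coordinates of a folded function $f_w\colon\{-1,1\}^{R_w}\to\{-1,1\}$ (its purported long code). The hyperedges, of size four, are generated by a verifier test that picks a random edge $e=(u,v)$ of $\mc{L}$ with projection $\pi$, draws four query points in the long-code blocks attached to $e$ from a fixed distribution $\mc{D}$ that is a product over coordinates and that couples coordinate $i$ on the ``$u$-side'' to coordinate $\pi(i)$ on the ``$v$-side'', and outputs the corresponding four vertices. I would choose $\mc{D}$ so that: (i) \emph{honest NAE} --- if the $f_w$ are the genuine long codes of a satisfying labeling of $\mc{L}$, the four queried bits are never all equal, so folded genuine long codes properly $2$-color the hypergraph (in particular it is $2$-colorable whenever $\mc{L}$ is satisfiable); and (ii) \emph{connectedness} --- on each coordinate the joint law has full support and a spectral gap bounded away from $1$, so that the uniform reverse hypercontractivity estimate of \cite{MOS} applies to $\mc{D}$. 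Working with the uniform $\pm1$ long code here, instead of a $p$-biased one, is exactly what removes the variable-bias parameter and shortens the analysis.

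Completeness is then immediate: if $\mc{L}$ is satisfiable, the genuine folded long codes of a satisfying labeling $2$-color the hypergraph, so each color class is an independent set, of relative size exactly $1/2$ by folding. For soundness, suppose there is an independent set $S$ of relative size at least $\gamma':=\gamma^{\Theta(1)}\gg\gamma$, and let $h_w:=\Ind_S$ restricted to block $w$. No hyperedge lies inside $S$, so $\E_{\mc{D}}[\prod_{i=1}^{4}h_{w_i}(x^{(i)})]=0$ for every edge of $\mc{L}$, where $w_1,\dots,w_4\in\{u,v\}$ record which endpoint each point belongs to. By a standard averaging argument a $\poly(\gamma')$ fraction of edges are such that the $\mc{D}$-marginal on each queried block has $h$-measure at least $\poly(\gamma')$; fix such an edge. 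The reverse-hypercontractivity / Markov-chain mixing estimate of \cite{MOS} applied to the product distribution $\mc{D}$ now forces a dichotomy: either this expectation is at least some $\beta(\gamma')>0$ --- impossible --- or one of $h_u,h_v$ has a coordinate of low-degree influence at least $\tau=\tau(\gamma',\beta)>0$. Hence a $\poly(\gamma')$ fraction of edges of $\mc{L}$ have an influential coordinate at an endpoint, and a standard decoding --- assign each $w$ a uniformly random coordinate from its $O(1)$-size set of $\tau$-influential low-degree coordinates, using that $\mc{D}$ couples coordinates via $\pi$ --- satisfies a $(\gamma')^{O(1)}\poly(\tau)$ fraction of $\mc{L}$, which exceeds $\gamma$ for a suitably small choice of $\gamma'$: a contradiction. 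Since in the satisfiable case the hypergraph is $2$-colorable with an independent set of relative size $1/2$, while otherwise every independent set has relative size below $\gamma'$, a polynomial-time algorithm finding an independent set of relative size $(\log n)^{-c}$ in a $2$-colorable $4$-uniform hypergraph would decide gap-$3$SAT in time $N^{O(\llog N)}$.

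The main obstacle will be the soundness analysis, and inside it the correct use of the uniform reverse-hypercontractive estimate of \cite{MOS} over the non-uniform product-space distribution $\mc{D}$: one must engineer $\mc{D}$ to be simultaneously ``NAE in the honest case'' (needed for completeness and $2$-colorability) and ``connected on every coordinate'' (so that the Markov chain underlying $\mc{D}$ mixes quickly enough for the reverse-hypercontractive lower bound to kick in), \emph{and} then carry the low-degree-influence decoding through the projections with only a polynomial loss in $\gamma'$. Producing a single distribution with both properties, and with \emph{no} bias parameter, is the heart of the matter; the parallel-repetition bookkeeping, the folding, and the influence decoding should be routine by comparison.
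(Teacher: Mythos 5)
Your high-level plan — drop the variable bias, use the uniform reverse hypercontractivity of \cite{MOS}, and get a polynomial (rather than exponential) loss in $\delta$ — is exactly the paper's guiding idea. But the core of the soundness argument, where you invoke \cite{MOS}, is where the proposal goes wrong, and the way you invoke it would not yield the improved bound.

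You want to apply a ``dichotomy'' to the four-query distribution $\mc{D}$: either $\E_{\mc{D}}[\prod_i h(x^{(i)})] \geq \beta(\gamma') > 0$, or some $h$ has a low-degree influential coordinate. Theorem~\ref{thm-MOS2} is not a dichotomy; it is an unconditional positivity estimate for two sets under a $\rho$-\emph{correlated} pair, and it says nothing about influences. The dichotomy you are describing is the invariance-principle / Gaussian noise-correlation route (the one underlying \cite{GHS}, \cite{Holmerin}), and it is precisely that route, together with the accompanying need for a variable bias or a ``smooth'' Label Cover, that produces the $\delta^{O(1/\delta)}$-type loss the paper is trying to avoid. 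Moreover you cannot apply Theorem~\ref{thm-MOS2} directly to the four-query distribution on each coordinate-block: by design that distribution must exclude the all-ones and all-minus-ones tuples (your ``honest NAE'' condition), so it is not a $\rho$-correlated quadruple, and in fact your requirement that the coordinate-wise joint law have full support is in direct tension with honest NAE.

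The paper's actual use of \cite{MOS} is more surgical, and this is the missing idea. The verifier queries two tables $A = \mathbbm{1}_{\mc{I}^v}$ and $B = \mathbbm{1}_{\mc{I}^w}$ for two $V$-vertices $v, w$ sharing a common $U$-neighbor $u$, with a distribution in which, blockwise over $\pi^{-1}_{vu}(i)$, the pair $(x, -x')$ is $\tfrac12$-correlated in the sense of Definition~\ref{def-1} (with probability $\tfrac12$ set $-x' = x$, else sample $x'$ uniformly), and the analogous statement holds for $(y, -y')$. Theorem~\ref{thm-MOS2} is then applied only to the \emph{two}-query marginals, giving
\[
\E[A(x)A(x')] \;=\; \Pr\!\left[x \in \mc{I}^v,\ -x' \in -\mc{I}^v\right] \;\geq\; (\delta/2)^{c_1}, \qquad \E[B(y)B(y')] \;\geq\; (\delta/2)^{c_1},
\]
which is legitimate because $(x, -x')$ \emph{does} have full support and is exactly $\tfrac12$-correlated. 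Independence of the $v$- and $w$-sides together with Fourier expansion then gives $\E[AA']\E[BB'] = \sum_{\alpha,\beta} \wh{A}_\alpha^2 \wh{B}_\beta^2 \Gamma^{vu}_\alpha \Gamma^{wu}_\beta \geq (\delta/2)^{2c_1}$, while the independent-set condition together with Lemma~\ref{lem-zerointer} shows that only terms with $\pi_{vu}(\alpha) \cap \pi_{wu}(\beta) = \emptyset$ contribute to $\E[AA'BB'] = 0$. Subtracting, the ``intersecting-projection'' mass is at least $(\delta/2)^{2c_1}$, and this is what the (standard, non-influence-based) decoding argument consumes, with the mild smoothness of Theorem~\ref{thm-LC}(b) controlling the large-$\alpha$ tail. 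No dichotomy, no influence threshold, and no bias parameter are used. Until you replace the proposed dichotomy step by this kind of Fourier-level comparison against $\E[AA']\E[BB']$, the argument as written would either be incorrect (Theorem~\ref{thm-MOS2} does not apply to $\mc{D}$) or, if repaired via the invariance principle, would reintroduce the exponential loss you set out to avoid.

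Two smaller points: the paper places both tables on the $V$-side (two vertices sharing a $U$-neighbor), not one at each end of a Label Cover edge, and this symmetry is what makes the $\E[AA']$ and $\E[BB']$ estimates drop out cleanly; and folding plays no role in the independent-set reduction — the ``independent set of relative size $1/2$'' in the YES case comes from the $2$-coloring by $z_{\sigma(v)}$, not from folding a table.
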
 
As mentioned above, the results of \cite{GHS} and \cite{Holmerin} are
based on the PCP verifier used by H\aa stad \cite{Hastad} for
satisfiable {\sc Max-E$4$-Set-Splitting}. In this problem 
the input is a ground set
and a collection of its $4$-sets, and the goal is to partition the
ground set into two subsets to maximize the number of \emph{split} 
$4$-sets.  
Another fundamental constraint satisfaction problem studied by
H\aa stad \cite{Hastad} is {\sc Max-E$3$-SAT}, where the goal is to
satisfy the maximum number of a collection of $3$-literal clauses. 
H\aa stad showed
that approximating both these problems -- on satisfiable instances -- 
within $\delta$ of their random assignment threshold of $\frac{7}{8}$ is NP-hard for any 
constant $\delta > 0$ and quasi-NP-hard for
$\delta = \Omega\left((\log \log n)^{-1}\log\log\log n\right)$. Using a
strengthened analysis of Holmerin~\cite{Holmerin},
Guruswami~\cite{guruswami} extended the inapproximability to {\sc
Max-E$k$-Set-Splitting}, for any constant $k\geq 4$, with the
corresponding threshold of $\left(1-2^{-k+1}\right)$. 

In this
work we prove the following hardness thresholds for these problems 
improving exponentially the
non-constant parameter $\delta$.
\begin{theorem}\label{thm-main2}
Given an instance of {\sc Max-E$3$-SAT} of size $n$, 
it is quasi-NP-hard to decide
whether it is satisfiable or at most $\frac{7}{8} + \delta$ fraction of the
clauses can be satisfied, where $\delta = \frac{1}{\left(\log
n\right)^c}$ for some positive constant $c > 0$.
\end{theorem}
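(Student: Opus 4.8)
The plan is to instantiate the Label Cover $+$ folded Long Code paradigm underlying Håstad's $3$-query PCP for satisfiable {\sc Max-E$3$-SAT}, but with the streamlined inner verifier promised in the abstract — one that samples its query points from the \emph{uniform} distribution on $\{\pm1\}^{\Sigma}$ rather than from a biased product distribution, and hence carries no variable-bias parameter — and then to track every parameter so that the final soundness gap is $(\log n)^{-c}$. For the outer verifier, starting from a {\sc $3$SAT} formula of size $N$ I would use the PCP theorem to obtain a $2$-prover $1$-round projection game (a \LC{} instance) of size $\poly(N)$ with perfect completeness, a fixed soundness $s_0<1$ and a fixed alphabet size $d_0$, and then apply $r$ rounds of parallel repetition: by the parallel repetition theorem for projection games the soundness drops to $2^{-\Omega(r)}$, while the alphabet grows to $d_0^{r}$ and the size to $N^{O(r)}$. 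I would take $r=\Theta(\log\log N)$, chosen as large as possible subject to $d_0^{r}\le\log N$, which yields a game $\mathcal G$ of size $N^{O(\log\log N)}$ with alphabet $|\Sigma|\le\log N$ and soundness $\gamma=(\log N)^{-c_0}$ for a fixed constant $c_0>0$. Keeping the alphabet polylogarithmic is essential, since the Long Code blow-up below is $2^{|\Sigma|}$ and one therefore cannot afford the far larger alphabets of sub-constant-error projection PCPs; plain repetition, with its modest alphabet growth, is exactly what is needed, and this is what pins $\gamma$ to the ``quasi'' regime.

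For the inner verifier I would attach to each vertex $w$ of $\mathcal G$ a supposed Long Code $A_w:\{\pm1\}^{\Sigma_w}\to\{\pm1\}$ folded over true, so that every Fourier coefficient of even degree vanishes. The verifier samples a random edge $(u,v)$ with projection $\pi$, picks $f\in\{\pm1\}^{\Sigma_u}$ and $g\in\{\pm1\}^{\Sigma_v}$ uniformly together with an $\eps$-noise string, forms a third query point $h$ out of $f\circ\pi$, $g$ and the noise in the manner of Håstad's {\sc E$3$-SAT} test, and accepts iff the $3$-literal clause on $(A_u(f),A_v(g),A_v(h))$ holds. Collecting one weighted $3$-clause per outcome of the verifier (over the proof bits viewed as Boolean variables) and clearing denominators gives an unweighted {\sc Max-E$3$-SAT} instance $\varphi$ of size $n=N^{O(\log\log N)}\cdot 2^{|\Sigma|}=N^{O(\log\log N)}$, so that $\log n=O(\log N\cdot\log\log N)$.

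If the {\sc $3$SAT} formula is satisfiable, taking each $A_w$ to be the dictator given by a satisfying labeling of $\mathcal G$ makes the queried clause true with probability $1$ — exactly what the construction of $h$ together with folding is designed to guarantee — so $\varphi$ is satisfiable. For soundness I would run the standard Fourier argument: expanding $\Pr[\,\text{acc}\,]$ in the Fourier bases of $A_u$ and $A_v$, the constant contribution equals exactly $\tfrac78$, and any surplus $\delta$ must come from a term bounded — up to lower-order cross terms that folding eliminates or that themselves aid decoding — by $\sum_{\alpha\ne\emptyset}\bigl|\widehat{A_u}(\pi_2(\alpha))\bigr|\,\widehat{A_v}(\alpha)^2\,(1-2\eps)^{|\alpha|}$, where $\pi_2(\alpha)\subseteq\Sigma_u$ is the set induced by $\pi$ on $\alpha$. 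The decay factor $(1-2\eps)^{|\alpha|}$ lets me truncate to $|\alpha|\le O\!\bigl(\eps^{-1}\log(1/\delta)\bigr)$, after which the usual randomized-labeling decoding (sample $\alpha$ with probability $\widehat{A_v}(\alpha)^2$ and return a uniform coordinate of it, and similarly on the $u$ side) satisfies in expectation a $\poly(\delta,\eps)$ fraction of the constraints of $\mathcal G$; fixing $\eps$ to a small constant makes this $\delta^{\Omega(1)}$. Hence $\OPT(\varphi)>\tfrac78+\delta$ forces $\mathcal G$ to have value $>\delta^{\Omega(1)}$, which together with $\gamma=(\log N)^{-c_0}$ and $\log n\le(\log N)^2$ gives $\delta\ge(\log n)^{-c}$ for a suitable constant $c>0$. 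So distinguishing ``$\varphi$ satisfiable'' from ``$\OPT(\varphi)\le\tfrac78+(\log n)^{-c}$'' would decide {\sc $3$SAT} in time $N^{O(\log\log N)}$, contradicting the hypothesis NP $\not\subseteq$ DTIME$(N^{O(\log\log N)})$.

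The main obstacle I anticipate is the joint budgeting of $r$, $\eps$ and $\delta$: $r$ must be large enough that $\gamma$ is a sufficiently small inverse polylog yet small enough that $d_0^{r}\lesssim\log N$ (so the Long Code blow-up stays within $N^{O(\log\log N)}$); $\eps$ must be a constant so that the decoding loss is only a fixed power of $\delta$; and one must verify that after the loss $\delta\mapsto\delta^{\Omega(1)}$ there remains a genuine positive constant $c$ realizing the gap $(\log n)^{-c}$. Designing the perfectly complete inner test so that it dispenses with the variable-bias parameter while respecting this budget, and checking that the low-order Fourier cross terms really cause no harm, is the crux of the argument.
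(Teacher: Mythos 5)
Your overall blueprint matches the paper's: outer verifier from the PCP theorem plus parallel repetition with $r=\Theta(\log\log N)$, folded Long Codes, a Håstad-style three-query test with a \emph{fixed} noise rate $\eps$ rather than a variable bias, Fourier expansion, truncation, and randomized label decoding. However there is a genuine gap in the way you budget $\eps$.

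You propose to ``fix $\eps$ to a small constant.'' This does not give the $(\log n)^{-c}$ gap. Expanding the acceptance probability, folding kills $\E[A(x)]$, $\E[B(y)]$, $\E[B(y')]$ and independence kills $\E[A(x)B(y)]$, $\E[A(x)B(y')]$, but the term $\E[B(y)B(y')]$ does \emph{not} vanish and its sign is not controlled. It contributes an additive error on the order of $\eps$ (the paper's Lemma \ref{lem-bb1} shows $\lvert\E[B(y)B(y')]\rvert\le\eps/2$). Consequently a surplus $\Pr[\text{acc}]\ge\tfrac78+\delta$ only forces $\lvert\E[A(x)B(y)B(y')]\rvert\gtrsim\delta-\eps/2$, which is vacuous once $\delta<\eps/2$. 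With constant $\eps$ you therefore recover only Håstad's constant-gap NP-hardness, not the improved quasi-NP-hardness with $\delta=(\log n)^{-c}$. The paper's key move is to take $\eps=\delta$ (not a constant and not a variable-bias distribution over scales), and then show that the truncation/decoding loss, which depends on $\eps^{-1}$, is still only $\poly(\delta)$ rather than $\delta^{O(\delta^{-1})}$ as in Håstad's variable-bias analysis. This is exactly what ``avoiding the variable bias'' buys: you may afford $\eps$ as small as $\delta$ because the loss in decoding is polynomial in $1/\eps$, not exponential.

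A secondary imprecision: your truncation argument invokes a decay factor $(1-2\eps)^{|\alpha|}$ in the \emph{size} of the $V$-side Fourier set. The actual decay from the noise is in the size of the \emph{projected} set $\lvert\pi_{vu}(\beta)\rvert$ (the paper's Lemma \ref{lem-measure}), not $|\beta|$ itself, and these can differ badly. The paper handles this via property (b) of Theorem \ref{thm-LC} (Håstad's structural/smoothness property of parallel-repeated Label Cover), using it to bound the Fourier mass on $\beta$'s with $|\beta|\ge R$ but $\lvert\pi_{vu}(\beta)\rvert<R^{c_0}$. This is where the exponent $c_0$ enters and is essential to making the truncation threshold, and hence the final constant $c$, concrete. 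Your sketch never invokes this property, and without it the step from ``Fourier coefficients decay'' to ``decoding succeeds with probability $\poly(\delta)$'' does not go through.
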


\begin{theorem}\label{thm-main3}
For any fixed $k \geq 4$, given an instance of {\sc Max-E$k$-Set-Splitting} of size $n$, 
it is quasi-NP-hard to decide
whether there is a partition of the ground set into two subsets 
splitting all the $k$-sets in the collection or at most 
$\left(1 - 2^{-k+1}\right) + \delta$ fraction of the $k$-sets are split by any
such partition, 
where $\delta = \frac{1}{\left(\log
n\right)^c}$ for some positive constant $c > 0$.
\end{theorem}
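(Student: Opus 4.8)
The plan is to follow the standard Label-Cover-plus-long-code template with a PCP verifier tailored to set splitting, carrying out the soundness analysis by Fourier expansion over the Boolean hypercube, in the regime where avoiding a bias parameter yields a sub-constant gap. First I would fix a \LC instance obtained from {\sc 3SAT} on $N_0$ variables via the PCP theorem and $t = \Theta(\log\log N_0)$ rounds of parallel repetition, so that the label sets have size $\poly\log N_0$, the soundness is $\gamma = (\log N_0)^{-\Omega(1)}$, and the construction runs in $\mathrm{DTIME}(N_0^{O(\log\log N_0)})$; the final instance has size $n$ with $\log n = \Theta(\log N_0 \cdot \log\log N_0)$, so an additive completeness-soundness gap of $\poly(\gamma)$ becomes $\delta = (\log n)^{-c}$ for a suitable $c>0$. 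The ground set of the {\sc Max-E$k$-Set-Splitting} instance is the disjoint union, over all \LC vertices $w$, of the coordinates of a folded long-code table $A_w\colon\{-1,1\}^{[m_w]}\to\{-1,1\}$, folded antipodally so $A_w(-z)=-A_w(z)$; a partition of the ground set is identified with a choice of values $A_w(z)\in\{-1,1\}$.

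The verifier, which defines the collection of $k$-sets, picks a random \LC edge $(u,v)$ with projection $\pi$, samples $x\in\{-1,1\}^{[m_v]}$ and auxiliary points $y^{(1)},\dots,y^{(k-1)}\in\{-1,1\}^{[m_u]}$ from the appropriate \emph{unbiased} noisy product distribution, and forms a $k$-set of queried coordinates $\{A_v(x),\,A_u(y^{(1)}\cdot(x\circ\pi)),\,\dots\}$ so that (i) on the intended encoding of a satisfying labeling the $k$ queried bits are never all equal, giving completeness $1$ (the antipodal folding is exactly what makes the honest dictator assignment split every $k$-set), and (ii) the $k$ coordinates are generically distinct, so each sampled $k$-set is a legitimate element of the instance. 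For soundness I would arithmetize: writing $b_1,\dots,b_k$ for the queried $\pm1$ values and using that the odd-degree terms of $\prod_i(1+b_i)+\prod_i(1-b_i)$ cancel, the split probability of a fixed partition equals
\[
1 - 2^{-k+1} - 2^{-k+1}\sum_{\emptyset\neq S\subseteq[k],\ |S|\text{ even}}\E\Big[\prod_{i\in S}b_i\Big].
\]
Hence if some partition splits more than a $(1-2^{-k+1})+\delta$ fraction, then $\big|\E[\prod_{i\in S}b_i]\big|\ge\delta$ for some nonempty even-size $S$, since there are only $2^{k-1}-1$ such terms.

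Next I would expand $\E[\prod_{i\in S}b_i]$ in the Fourier coefficients $\wh{A_u}$ and $\wh{A_v}$. By the standard computation this becomes a sum over tuples of characters that must line up through $\pi$ (the even size of $S$ together with the folding forces the surviving parities to respect the projection constraint), weighted by powers of the noise rate, which exponentially damp characters of degree $\gg\log(1/\delta)$. Pruning to low-degree characters and applying Cauchy--Schwarz as usual, I would extract from $A_u,A_v$ a randomized labeling of $u,v$ satisfying the edge $(u,v)$ with probability $\poly(\delta)$ in expectation over the \LC edge; choosing $\gamma\ll\poly(\delta)$ contradicts the soundness of \LC. For general fixed $k\ge 4$ the same scheme works with the query distribution and the matching-character bookkeeping adjusted to $k$ queries; alternatively one could bootstrap from the {\sc Max-E$3$-SAT} hardness of Theorem~\ref{thm-main2}, but the direct construction generalizes more transparently.

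The step I expect to be the main obstacle is the analysis of the ``mixed'' terms $\E[\prod_{i\in S}b_i]$ in which $S$ contains queries from \emph{both} the $u$-table and the $v$-table: here one must relate Fourier characters of $A_v$ over $[m_v]$ to those of $A_u$ over $[m_u]$ through $\pi$, and verify that, once the bias parameter is removed, the plain noise operator still kills all high-degree contributions while the honest low-degree parts survive and can be decoded — this is precisely where earlier analyses invoked the bias and where the present simplification must be justified. A secondary technical point is guaranteeing that the $k$ queried coordinates are always distinct and that folding never accidentally collapses two of them, so that every $k$-set produced by the verifier is a valid $k$-set of the output instance.
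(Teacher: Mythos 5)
The paper does not construct a direct $k$-query PCP at all. Instead it proves Theorem~\ref{thm-main4} --- a reduction to {\sc Max-E$4$-Set-Splitting} with the \emph{stronger} NO-case guarantee that any $\rho$-fraction of the ground set completely contains at least $\rho^4 - (\log n)^{-c}$ of the $4$-sets --- and then cites Theorem~8 of Guruswami~\cite{guruswami} as a black-box reduction from that statement to hardness of {\sc Max-E$k$-Set-Splitting} for every fixed $k \ge 4$. So your plan of designing a $k$-query verifier directly is a genuinely different route from the one taken here, and the paper never needs to analyze a $k$-ary long-code test.

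Beyond that difference, there is a concrete gap in your construction: folding cannot be enforced in a set-splitting reduction. Folding over $-1$ means the verifier may read $A(-z)$ as the \emph{negation} of the proof bit $A(z)$; this is implementable in a CNF reduction (via negated literals, which is why the {\sc Max-E$3$-SAT} verifier in Section~\ref{sec-max3} is folded) but not in set splitting, where a partition only records which side an element is on and carries no sign. There is no way, within a set-splitting instance, to force antipodal tables $A(z)$, $A(-z)$ to take opposite values, and identifying $z$ with $-z$ in the ground set destroys the sign information the test needs. Section~\ref{sec-4ss} explicitly says the Long Codes are \emph{not} folded for this reason. Once you drop folding, the constant Fourier coefficient $\widehat{A}_\emptyset$ is no longer zero; its contribution is precisely what produces the $\rho^4$ floor in the NO case of Theorem~\ref{thm-main4}, and the paper deals with it by tracking $p_u = \E[A^v(x)]$ explicitly and applying Jensen. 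Your soundness sketch invokes folding to kill exactly these low-degree terms, so it would not go through as written; you would need either to prove a $\rho$-parametrized statement and then apply a Guruswami-style reduction, or to find some other mechanism to control $\widehat{A}_\emptyset$ in a $k$-ary test.
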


The results of this paper 
are obtained using simpler PCP verifiers for the
above problems, compared to the ones used by
H\aa stad~\cite{Hastad}, Guruswami et al.~\cite{GHS} and
Holmerin~\cite{Holmerin}. In particular, we avoid the use of a \emph{variable
bias} parameter, which yields an exponential improvement
in the inapproximability thresholds. This also considerably simplifies
our analysis compared to previous works. 
In addition, for proving Theorem \ref{thm-main1}, we are
able to use an estimate of the mixing probability of Markov Chains
over general product spaces shown -- using \emph{uniform} reverse
hypercontractivity -- by Mossel
et al.~\cite{MOS}. The proofs of Theorems \ref{thm-main2} and
\ref{thm-main3} 
use well known techniques from
Fourier Analysis, while avoiding some of the complications in previous
results. 
We remark that the starting point of our hardness
reductions is the standard Label Cover problem instead of the
so-called \emph{Smooth} Label Cover which can also 
be used to avoid the
variable bias but incurs the same loss in the hardness
factors~\cite{KhotSLC}. Section \ref{sec-tech} elaborates more on the
techniques used in this paper.

Our results also yield similar improvements in the
 hardness for satisfiable instances for other
predicates whose inapproximability in \cite{Hastad}
is shown to follow from the PCP verifiers used for satisfiable
{\sc Max-E$3$-SAT} and {\sc Max-E$4$-Set-Splitting}. The reader is referred to
Theorems 6.15, 6.18, 7.17 and 7.18 and Section 9 of
\cite{Hastad} for more details on these predicates.

\subsection{Problem Definition}
For a hypergraph $G$, let ${\sf IS}(G)$ be the size of its maximum
independent set and let $\chi(G)$ be its chromatic number.
The following is the problem of finding independent sets
in $q$-colorable hypergraphs.

\begin{definition}
{\sc ISColor}$(k, q, Q)$ : 
Given a $k$-uniform hypergraph $G(V, E)$, decide between, 
\begin{center}
 \tn{(i) YES Case:} $\chi(G) \leq q$.  {\hskip 4em} \tn{(ii) NO Case:} ${\sf IS}(G) < \frac{|V|}{Q}$.
\end{center}
\end{definition}
The problem defined above is a generalization of hypergraph coloring:
if {\sc ISColor}$(k, q, Q)$ is NP-hard for some
parameters $q, Q \in \mathbb{Z}^+$ then it is
NP-hard to color a $q$-colorable $k$-uniform hypergraph with $Q$ colors. 

The following constraint satisfaction problems are studied in this paper. 
\begin{definition} An {\sc E$k$-CNF} formula is a conjunction of
clauses (disjunctions), where each clause has exactly $k$ literals. It
is said to be \emph{satisfiable} is there is an assignment to the
variables such that each clause has at least one true literal, i.e. is
satisfied.
\end{definition}
\begin{definition}
An instance of {\sc Max-E$k$-SAT} is an {\sc E$k$-CNF} formula, and the
goal is to find an assignment to satisfy the maximum number of
clauses. In satisfiable {\sc Max-E$k$-SAT}, the input is a satisfiable
{\sc E$k$-CNF} formula.
\end{definition}
In this paper we study the above for $k=3$, i.e. {\sc
Max-E$3$-SAT}.
\begin{definition}
An instance of {\sc Max-E$k$-Set-Splitting} is a ground set and a
collection of its subsets, each of size exactly $k$. The goal is to
find a partition of the ground set into two subsets to maximize the
number of \emph{split} $k$-sets in the collection, i.e. which are not
contained in one of the subsets of the partition. In satisfiable {\sc
Max-E$k$-Set-Splitting}, the input admits a partition that 
splits all $k$-sets in the
collection. 
\end{definition}

\subsection{Previous Work}

The problem of finding large independent
sets in $q$-colorable graphs and hypergraphs (for small values of $q$) 
is very well studied
algorithmically. On $2$-colorable, i.e. bipartite,
graphs, the maximum independent set can be computed in polynomial
time. A long line of research  --  
\cite{Wigderson}, \cite{Blum}, \cite{KMS}, \cite{BK},
\cite{ACC}, and \cite{KT12} -- 
has shown that a $3$-colorable graph can be efficiently 
colored with $n^\alpha$ colors thus solving 
{\sc ISColor}$(2, 3, n^{\alpha})$. The current best value of
$\alpha \approx 0.2038$ is due to \cite{KT12}. 
For $2$-colorable
$3$-uniform hypergraphs Krivelevich et al.~\cite{KNS} gave a coloring
algorithm using $O(n^{1/5})$ colors. An upper bound of $O(n^{3/4})$ was shown
for coloring $2$-colorable $4$-uniform hypergraphs 
by Chen and Frize~\cite{CF} and Kelsen, Mahajan and
Ramesh~\cite{KMR}.

On the complexity side, the work of Guruswami,
H\aa stad and Sudan~\cite{GHS} and Holmerin~\cite{Holmerin} showed
that {\sc ISColor}$\left(4,2, O\left((\log\log\log
n)^{-1}(\log\log n)\right)\right)$ is
quasi-NP-hard. Khot~\cite{Khot-color, Khot-3} showed the
inapproximability of 
{\sc ISColor}$(4,5, (\log n)^c)$ and  
{\sc ISColor}$(3,3, (\log \log n)^c)$.
Assuming the so called \emph{Alpha Conjecture}, Dinur
et al.~\cite{DMR} showed that {\sc ISColor}$(2,3, C)$ is NP-hard for
arbitrarily large constant $C>0$. Recently, assuming the
\emph{$d$-to-$1$ Games Conjecture}, Khot
and Saket~\cite{KS14} showed that 
{\sc ISColor}$(3,2,C)$ is similarly NP-hard.

In another recent work, Dinur and Guruswami~\cite{DG13} showed a
hardness factor of $\textnormal{exp}\left(2^{\sqrt{\log\log
n}}\right)$ for a variant of coloring $2$-colorable $6$-uniform hypergraphs. 
They also showed that {\sc ISColor}$(6,2,(\log n)^c)$ is
quasi-NP-hard. The former result is obtained via a novel use of the
recently introduced
\emph{Short} Code, while the latter result uses a more standard PCP
verifier based on the \emph{Long} Code. Building upon \cite{DG13} and
concurrent to our work,  Guruswami, Harsha, H\aa stad, Srinivasan and
Varma~\cite{GHHSV13} proved the first super-polylogarithmic hardness
for hypergraph coloring showing, in particular,
the hardness of \\ {\sc ISColor}$\left(8,2,\textnormal{exp}\left(2^{\sqrt{\log\log
n}}\right)\right)$,  
{\sc ISColor}$\left(4,4,\textnormal{exp}\left(2^{\sqrt{\log\log
n}}\right)\right)$ and
 {\sc ISColor}$\left(3,3,(\log n)^{O(1/\log\log\log n)}\right)$. 
However, previous to our work the best
inapproximability for case of $2$-colorable $4$-uniform hypergraphs
remained the result of \cite{GHS}.

For satisfiable  {\sc Max-E$3$-SAT} studied in this paper, 
the random assignment gives a
$\frac{7}{8}$ approximation. Karloff and
Zwick~\cite{KZ} showed a semi-definite programming (SDP) relaxation 
based algorithm yields the
same factor on instances where each clause has \emph{at
most} $3$
literals. Their algorithm can be used to obtain a (folklore) $\frac{7}{8} +
\delta$
approximation in time $\poly(n)2^{O(\delta n)}$. H\aa stad~\cite{Hastad} showed
the inapproximability of satisfiable {\sc Max-E$3$-SAT} beyond the random assignment
threshold. In particular an approximation of  $\frac{7}{8}
+ \delta$ is NP-hard for any constant $\delta > 0$ and quasi-NP-hard
for $\delta = \Omega\left((\log\log n)^{-1}(\log\log\log n)\right)$.
On the other hand, {Max-E$3$-Set-Splitting} is known to admit an
approximation factor of $0.912$ in the satisfiable
case, while the best inapproximability is $\frac{19}{20} + \delta$ by
Guruswami~\cite{guruswami}. However, satisfiable {\sc Max-E$4$-Set-Splitting} 
was shown by H\aa stad~\cite{Hastad} to be
hard to approximate beyond its random assignment 
threshold, i.e. an approximation
factor of $\frac{7}{8} + \delta$ is NP-hard for 
for any constant $\delta > 0$ and quasi-NP-hard
for $\delta = \Omega\left((\log\log n)^{-1}(\log\log\log n)\right)$.
Guruswami~\cite{guruswami} extended this to
satisfiable {\sc Max-E$k$-Set-Splitting} for $k\geq 5$ with a
corresponding inapproximability of $\left(1-2^{-k+1}\right)  +\delta$.
The techniques used in the above results can also be combined with the
subconstant error Label Cover of Moshkovitz and Raz~\cite{MR} to 
obtain NP-hardness for $\delta = \Omega\left((\log\log n)^{-O(1)}\right)$.

\subsection{Our Results}
This paper shows the following quasi-NP-hardness results obtained via
DTIME$\left(N^{O(\log\log N)}\right)$ reductions from {\sc $3$SAT}.
\begin{thm2}\textnormal{[Theorem \ref{thm-main1}]}
{\sc ISColor}$(4,2,(\log n)^c)$ is quasi-NP-hard for some constant $c
> 0$.
\end{thm2}
\begin{thm2}\textnormal{[Theorem \ref{thm-main2}]}
Satisfiable {\sc Max-E$3$-SAT} on $n$ variables 
is quasi-NP-hard to approximate within
$\frac{7}{8} + \frac{1}{(\log n)^c}$ for some constant $c > 0$.
\end{thm2}
\begin{thm2}\textnormal{[Theorem \ref{thm-main3}]}
For any $k \geq 4$, satisfiable 
{\sc Max-E$k$-Set-Splitting} on a ground set of $n$
elements  is quasi-NP-hard to approximate within
$\left(1 - 2^{-k+1}\right) + \frac{1}{(\log n)^c}$ for some constant $c > 0$.
\end{thm2}
The proofs of Theorems \ref{thm-main1} and \ref{thm-main2} are given
in Sections \ref{sec-hyper} and \ref{sec-max3} respectively. Theorem
\ref{thm-main3} follows from the following inapproximability of {\sc
Max-E$4$-Set-Splitting}.
\begin{theorem}\label{thm-main4}
There is a DTIME$\left(N^{O(\log\log N)}\right)$ reduction from {\sc $3$SAT}
to an instance of {\sc Max-E$4$-Set-Splitting} over a ground set of
size $n$ such that:

\smallskip
\noindent
\tn{YES Case:} There is a partition of the ground set which splits every
$4$-set of the instance.

\smallskip
\noindent
\tn{NO Case:} Any fraction $\rho > 0$ of the ground set completely 
contains at
least $\rho^4 - \frac{1}{(\log n)^c}$ fraction of the $4$-sets of the
instance, for some constant $c > 0$.
\end{theorem}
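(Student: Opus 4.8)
The plan is to give a Long Code based PCP verifier for {\sc Max-E$4$-Set-Splitting} whose acceptance predicate is ``not-all-equal'' on four bits, starting from a standard Label Cover instance and \emph{without} introducing a variable-bias parameter. First I would fix the starting point: by the PCP theorem together with Raz's parallel repetition theorem, {\sc $3$SAT} on inputs of size $N$ reduces, in time $N^{O(\log(1/\gamma))}$, to a bipartite regular projection Label Cover instance $\mathcal{L}$ with label sets of size $\mathrm{poly}(1/\gamma)$ that is perfectly satisfiable in the YES case and has soundness at most $\gamma$ in the NO case. I would set $\gamma = 1/(\log N)^{\Theta(1)}$, so that the number of repetitions is $\Theta(\log\log N)$; this keeps the whole construction within DTIME$(N^{O(\log\log N)})$ once the Long Codes, of size $2^{\mathrm{poly}(1/\gamma)} = 2^{(\log N)^{O(1)}}$, are attached. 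Crucially this uses \emph{plain} Label Cover, not smooth Label Cover.

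Next I would describe the produced instance. The ground set consists of suitably folded Long Code tables, one per vertex of $\mathcal{L}$, so that a partition of the ground set is exactly an assignment of a bit to every table entry. The $4$-sets are generated by a verifier that picks a random edge (equivalently, a vertex on the ``question'' side together with two random neighbours), random Long Code query points drawn from product distributions, and an independent $\epsilon$-noise perturbation with $\epsilon$ a small constant, and then outputs the four queried entries. Folding over the constant function takes the place of the variable-bias parameter and, together with the placement of the noise, is arranged --- following but simplifying H\aa stad's construction --- so that under the dictator decoding of a satisfying Label Cover labeling two of the four queried entries are forced to disagree. This yields perfect completeness: in the YES case every $4$-set is split, without using the noise.

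The core of the argument is the NO-case soundness, proved by Fourier analysis over the product spaces underlying the folded Long Codes. Let $B$ be any subset of the ground set of relative size $\rho$ and let $h = \mathbf{1}_B$, written block-by-block as $\{h_v\}$. I would expand $\mathbb{E}[\text{all four queried entries lie in } B]$ in the Fourier basis. The term coming from the empty Fourier set of every block contributes a product of four table densities; since, ignoring the Label-Cover-induced correlations, the verifier queries four near-independent near-uniform points inside Long Code blocks, convexity of $t \mapsto t^4$ (Jensen over the choice of block) makes this term at least $\rho^4 - o(1)$. Every remaining term involves a nonempty Fourier set; using the $\epsilon$-noise to damp high-degree Fourier weight and truncating at degree $O(\epsilon^{-1}\log(1/\delta))$, I would show that any edge whose surviving terms contribute more than $\delta$ in absolute value yields, via the projection constraint, a short list of candidate labels for both endpoints that agree with positive probability; averaging and applying the soundness $\gamma = \delta^{\Theta(1)}$ of $\mathcal{L}$ bounds the total contribution of all such terms by $\delta$. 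Hence the fraction of $4$-sets contained in $B$ is at least $\rho^4 - \delta$, and expressing $\delta$ in terms of the size $n$ of the produced instance gives $\delta = 1/(\log n)^{c}$.

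I expect the main obstacle to be the \emph{two-sided} form of the soundness statement: instead of the usual upper bound on the fraction of split constraints, Theorem~\ref{thm-main4} demands a lower bound of $\rho^4-\delta$ on the fraction of \emph{monochromatic} $4$-sets, which forces one to control the product-of-densities term cleanly even though the per-vertex densities are arbitrary and possibly correlated with the Label Cover structure. Driving that term down to $\rho^4 - o(1)$, rather than to something lossy like $(\mathbb{E}_v \rho_v)^4$, is what dictates how the verifier must sample its four queries, and reconciling that sampling with perfect completeness and with a clean decoding step is the delicate part; the remainder is standard Fourier bookkeeping together with the parameter balancing needed to stay inside DTIME$(N^{O(\log\log N)})$.
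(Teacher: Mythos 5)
Your overall plan---a Long Code PCP with a $4$-query not-all-equal test over a parallel-repetition Label Cover, dictator decoding, and a Fourier/decoding soundness argument with truncation---matches the paper's approach. But two concrete steps in your sketch are wrong, and one of them is fatal.

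First, and most seriously, you propose to \emph{fold} the Long Code tables and to rely on folding for perfect completeness. For {\sc Max-E$k$-Set-Splitting} folding is not available: a ground set element is just an element, and the constraint $A(-x) = -A(-x)^{-1}$, i.e.\ $A(-x) = -A(x)$, has no way of being enforced on an adversarial partition, because the problem has no negated literals. Indeed the NO-case statement you must prove quantifies over \emph{every} subset $B$ of density $\rho$, including subsets for which $x \in B$ and $-x \in B$ simultaneously, so even if you formally ``merged'' $x$ with $-x$ in the construction the soundness analysis would still have to treat unfolded proofs. The paper explicitly drops folding here (in contrast to the {\sc Max-E$3$-SAT} verifier, where literal negation makes folding legitimate). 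Perfect completeness is obtained instead from the structure of the test itself: in each block $i$ the verifier flips \emph{all} coordinates of $x'$ in $\pi_{vu}^{-1}(i)$ (step 3a) or all coordinates of $y'$ in $\pi_{wu}^{-1}(i)$ (step 3b), so the dictator coordinate of either $x,x'$ or of $y,y'$ is guaranteed to disagree, with no appeal to folding.

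Second, you attribute the avoidance of the variable bias $\eps$ to folding plus noise placement. That is not the mechanism. What actually replaces the variable bias is property (b) of Theorem~\ref{thm-LC}---the bound $\E_u[|\pi_{vu}(S)|^{-1}] \le |S|^{-2c_0}$, a structural fact about parallel repetition due to H\aa stad---which lets one control the Fourier mass of sets $\alpha$ with $|\alpha|\ge R$ but small projection $|\pi_{vu}(\alpha)|<T$, using the fixed two-level truncation $T = R^{c_0}$ together with a \emph{fixed} noise rate $\eps$. Without this property, the fixed-$\eps$ analysis collapses and one is pushed back to a random bias, which is precisely where the exponential loss in previous work comes from.

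Two smaller issues: your description of the four query points as ``near-independent near-uniform'' is off---$x$ and $x'$ are strongly correlated (blockwise equal, negated, or refreshed), and the analysis hinges on computing the resulting $\E[\chi_\alpha(xx')\chi_\beta(yy')]$ exactly (Lemmas~\ref{lem-4ss-J}--\ref{lem-4ss-JK}). And the Jensen step needs two levels: once over the inner vertex given $u$ to express the low-degree contribution as $(1+p_u)^4/16 - O(\eps)$, and once over $u$ to compare against $\rho^4$. A single ``Jensen over the choice of block'' does not yield $\rho^4$.
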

Theorem \ref{thm-main4} is proved in Section \ref{sec-4ss}. For the reduction from Theorem \ref{thm-main4} to Theorem
\ref{thm-main3} we refer the reader to Theorem 8 of \cite{guruswami}.

\subsection{Techniques}\label{sec-tech}
The results of this paper, as well as those of \cite{Hastad}, \cite{GHS} and 
\cite{Holmerin} are obtained by constructing  PCPs based on
Long Codes, i.e. the verifier accepts or rejects based on a $3$
or $4$ query test of the supposed Long Code encodings. The main
technical difference from previous works is our construction of the
these tests. Let us for now focus on PCP verifier used to
prove hardness of independent set in $2$-colorable $4$-uniform
hypergraphs, and abstract out the essence of the verifier's test.
The main ingredient is a joint distribution over 
$x, x', y, y' \in \{-1,1\}^d$ for some parameter $d$, to
satisfy (among others) the following property: for each $i, j \in
[d]$, $(x_i,x'_i, y_j,y'_j)\not\in \{(1,1,1,1),(-1,-1,-1,-1)\}$. 
The tuples $x$ and $x'$ constitute the
building blocks of
two queries made by the verifier to one purported Long Code encoding, whereas $y$ and $y'$ 
correspond to those made to a second encoding. 

The difference 
of our PCP test in
comparison to that of previous works is illustrated in the distribution
of the pair $(x,x')$ (which is identical to that of $(y,y')$). 
In previous works this is induced by
the following randomized process. Select $x$ uniformly at random 
and with probability
$\frac{1}{2}$ set $x' = -x$, otherwise: for each $i\in[d]$
independently, set 
$x'_i = x_i$ with
probability $1-\eps$ and $x'_i = -x_i$ with probability $\eps$. Here
$\eps$ is a bias parameter which itself is chosen from an
appropriate distribution. This variable bias leads to an
exponential loss in the hardness factor. 

In our case, the distribution on $(x,x')$ is simpler and obtained as 
follows.
Select $x$ u.a.r. and with probability $\frac{1}{2}$ set $x' = -x$,
otherwise select $x'$ u.a.r. This test does not use any variable bias. 
Additionally, $-x'$ is a $\frac{1}{2}$-correlated 
copy of $x$ which allows us to use results of Mossel,
Oleszkiewicz and Sen~\cite{MOS} to estimate the mixing
probability of the corresponding Markov Chain. This estimate is based on 
the uniform reverse hypercontractivity proved in the same work,
wherein 
the parameters do not depend on the measure of the smallest atom 
in the probability
space. This property -- unlike the usual hypercontractivity
inequality -- is crucial for us, as the smallest atom in our
application has measure exponential in $d$, which one cannot afford. 

The PCP verifiers for satisfiable {\sc Max-E$3$-SAT} and {\sc
Max-E$4$-Set-Splitting} also use similar distributions as above. While
their analysis does not require any mixing probability estimate, 
the avoidance of the variable bias improves the
inapproximability threshold and simplifies the analysis substantially.

\section{Preliminaries}
Let us define the following notion of $\rho$-correlated spaces used by
Mossel et al.~\cite{MOS}.
\begin{definition}\label{def-1}
Consider a product space $(\Omega, \mu) = (\prod_{i=1}^n\Omega_i,
\otimes_{i=1}^n\mu_i)$ where $(\Omega_i, \mu_i)$ are finite
probability spaces. We say that $(X, Y)\in \Omega^2$ are $\rho$-correlated 
if $X$ is distributed according to
$\mu$ and the conditional distribution of $Y$ given $X$ is as
follows: for each $i$ independently,
with probability $\rho$, $Y_i = X_i$ and with probability $1 - \rho$,
$Y_i$ is sampled independently from $\mu_i$.
\end{definition}
For our analysis in Section \ref{sec-hyper} we require an estimate of
the mixing probability of Markov Chains over general product spaces.
The corresponding bound for the case of the boolean hypercube was
proved by Mossel, O'Donnell, Regev, Steif and Sudakov~\cite{MORSS06}
using reverse hypercontractivity over the boolean domain. The
generalization below was subsequently shown
by Mossel et al.~\cite{MOS}, using uniform reverse
hypercontractivity over general product spaces proved in the same work. 
We refer the reader to
\cite{MOS} for more details.
\begin{theorem}\label{thm-MOS2}
Let $(\Omega, \mu)$ be the product probability space in Definition
\ref{def-1}. Let $A, B \subseteq \Omega$ be two sets such that
$\mu\{A\}, \mu\{B\} \geq \delta \geq 0$. Let $X$ be distributed
according to the product measure $\mu$ and $Y$ be a $\rho$-correlated
copy of $X$ for some $0 \leq \rho \leq 1$. Then,
$$ \Pr\left[X \in A, Y\in B\right] \geq \delta^{\frac{2-\sqrt{\rho}}{1 -
\sqrt{\rho}}}.$$
\end{theorem}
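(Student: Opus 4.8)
The statement to prove is a lower bound on $\Pr[X \in A,\, Y \in B]$ where $(X,Y)$ are $\rho$-correlated over a general finite product space, given $\mu\{A\}, \mu\{B\} \ge \delta$. Let me think about how I would prove this.

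First, I'd observe this is really a two-step Markov chain / noise-operator statement. The $\rho$-correlated pair $(X,Y)$ can be generated via the Bonami–Beckner type noise operator $T_\rho$ on the product space, whose action on a function $f: \Omega \to \R$ is $(T_\rho f)(x) = \E[f(Y) \mid X = x]$ where $Y$ is $\rho$-correlated with $x$. Then $\Pr[X \in A, Y \in B] = \langle \mathbf{1}_A, T_\rho \mathbf{1}_B \rangle_\mu$.

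The key tool is reverse hypercontractivity — specifically the *uniform* version from \cite{MOS} whose constants do not depend on the minimum atom probability. The standard application would be: write $T_\rho = T_{\sqrt\rho} T_{\sqrt\rho}$ (semigroup property of the noise operator — $\sqrt\rho$ is exactly the correlation that appears in the exponent), so that $\langle \mathbf{1}_A, T_\rho \mathbf{1}_B\rangle = \langle T_{\sqrt\rho}\mathbf{1}_A, T_{\sqrt\rho}\mathbf{1}_B\rangle$, and then bound this bilinear form below. The reverse hypercontractive inequality states that for suitable $1 > p, q$ with $(1-p)(1-q) \ge \rho'$ (here $\rho' = \sqrt{\rho}$ after the split, or we apply it once directly with $\rho$), one has $\langle f, T_{\rho'} g\rangle \ge \|f\|_p \|g\|_q$ for nonnegative $f,g$ — note the inequality direction is reversed and the norms are the "wrong-side" $L^p$ norms with $p < 1$. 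Plugging $f = \mathbf{1}_A$, $g = \mathbf{1}_B$ gives $\|\mathbf{1}_A\|_p = \mu\{A\}^{1/p}$ and similarly for $B$, so $\Pr[X\in A, Y \in B] \ge \delta^{1/p + 1/q}$.

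The remaining work is the optimization of exponents: choose $p, q < 1$ (in fact $p, q \le 0$ is allowed in the reverse regime) subject to the hypercontractivity constraint relating $(p,q)$ and $\rho$, so as to minimize $1/p + 1/q$, and check this optimum equals $\frac{2 - \sqrt\rho}{1 - \sqrt\rho}$. I expect the right choice is the symmetric one $p = q$, where the constraint forces $1 - p = \sqrt{\rho}$... I should be careful here: with a single application of $T_\rho$ the constraint is typically $(1-p)(1-q) = \rho$ giving $1 - p = \sqrt\rho$ at the symmetric point, hence $1/p = 1/(1-\sqrt\rho)$ and $1/p + 1/q = 2/(1-\sqrt\rho)$, which is \emph{not} the claimed bound. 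So the split $T_\rho = T_{\sqrt\rho}^2$ with a more careful two-sided argument (or an asymmetric choice, or combining with Hölder) must be what yields the improved exponent $\frac{2-\sqrt\rho}{1-\sqrt\rho} = \frac{1}{1-\sqrt\rho} + 1$; the "$+1$" suggests one side is handled by a trivial $L^1$/$L^\infty$ bound and only the other side pays the reverse-hypercontractive price at correlation $\sqrt\rho$.

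The main obstacle, then, is getting the exponent bookkeeping exactly right — in particular verifying which form of the reverse hypercontractive inequality from \cite{MOS} to invoke, with which admissible $(p,q,\rho)$ triple, and confirming that the optimal allowable exponents sum to precisely $\frac{2-\sqrt\rho}{1-\sqrt\rho}$ rather than something weaker. Since the theorem is quoted verbatim from \cite{MOS} (via the hypercube version in \cite{MORSS06}), for the purposes of this paper I would simply cite \cite{MOS, MORSS06} for the full argument and reproduce only the statement, as the excerpt already does. The only genuinely new content needed downstream is that the exponent $\frac{2-\sqrt\rho}{1-\sqrt\rho}$ stays bounded (indeed, close to $2$) for $\rho$ bounded away from $1$, which is what makes it usable with the $\rho = \frac12$ instantiation in Section~\ref{sec-hyper}.
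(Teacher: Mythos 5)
The paper does not prove Theorem~\ref{thm-MOS2} at all: it states the bound and refers the reader to~\cite{MOS} (and to~\cite{MORSS06} for the boolean-hypercube precursor), which is exactly what you conclude should be done. Your supplementary sketch correctly identifies the machinery behind the cited result (the two-function reverse hypercontractive inequality with uniform constants, a semigroup split of the noise operator, and an optimization over Hölder exponents), and your candid acknowledgement that the precise bookkeeping yielding $\frac{2-\sqrt{\rho}}{1-\sqrt{\rho}}$ is not carried through is harmless, since the paper itself relies on the citation rather than an in-line derivation.
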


The starting point of the reductions in this paper is the \LC problem which is defined as follows.
\begin{definition}\label{def-LC}
An instance $\mc{L}$ of \LC consists of a bipartite graph $G(U,V, E)$
along with label sets $[k]$ and $[m]$. For each edge $e$ between $u
\in U$ and $v\in V$, there is a projection $\pi_{vu} : [m]\mapsto
[k]$. A labeling $l_u \in  [k]$ to $u$ and $l_v\in [m]$ to $v$ satisfies the
edge if $\pi_{vu}(l_v) = l_u$. The goal is to find a labeling  
of $U$ and $V$ to satisfy the
maximum number of edges.
\end{definition} 

The inapproximability of \LC stated below follows from the PCP
Theorem~\cite{AS, ALMSS}, Raz's Parallel Repetition Theorem~\cite{Raz}
and a structural property proved by H\aa stad~\cite{Hastad}.
\begin{theorem}\label{thm-LC}
For every positive integer $r$, there is a deterministic $N^{O(r)}$
time reduction from a {\sc $3$SAT} instance 
of size $N$ to an instance $\mc{L}(G(U, V, E),
\{\pi_{vu}\}_{\{v,u\}\in E}, [k], [m])$ of \LC with the following
properties:
\begin{itemize}
\item[a.] $|U|, |V| \leq N^{O(r)}$. $k, m \leq 2^{3r}$. 
$G$ is bi-regular with left and right degrees bounded by
$2^{O(r)}$. 
\item[b.] There is a universal constant $c_0 > 0$ such that for any $v
\in V$ and $S \subseteq  [m]$, taking an expectation over a random
neighbor $u$ of $v$,
$$\E\left[\left|\pi_{vu}(S)\right|^{-1}\right] \leq |S|^{-2c_0}.$$
The above implies that with probability over a random neighbor $u$ of
$v$,
$$\Pr\left[\left|\pi_{vu}(S)\right| < |S|^{c_0}\right] \leq
|S|^{-c_0}.$$
\item[c.] There is a universal constant $\gamma_0 > 0$ such that,

\smallskip
\noindent
\tn{YES Case:} If the {\sc $3$SAT} instance is satisfiable then there is a
labeling to $U$ and $V$ that satisfies all edges of $\mc{L}$.

\smallskip
\noindent
\tn{NO Case:} If the {\sc $3$SAT} instance is unsatisfiable then any labeling to
$U$ and $V$ satisfies at most $2^{-\gamma_0r}$ fraction of the edges.

\end{itemize}
\end{theorem}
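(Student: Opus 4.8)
\ The plan is to realize $\mc{L}$ as the $r$-fold parallel repetition of the standard clause--variable two-prover one-round game for a degree-regular, constant-gap version of {\sc $3$SAT}, and then to read off properties a--c from (i) the PCP Theorem, (ii) Raz's Parallel Repetition Theorem, and (iii) a direct counting argument for the projection estimate b. First I would fix the base game. Applying the PCP Theorem~\cite{AS, ALMSS} to the input {\sc $3$SAT} instance of size $N$, followed by the standard expander-based degree reduction and regularization, produces in deterministic polynomial time a $3$-CNF formula $\phi$ with $N^{O(1)}$ variables and clauses in which every clause has exactly three distinct variables and every variable occurs in exactly some constant number $B$ of clauses, with $\phi$ satisfiable in the YES case and no assignment satisfying more than a $1-\eps_0$ fraction of its clauses in the NO case, for a universal $\eps_0>0$. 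Let $U_0$ be the variables and $V_0$ the clauses of $\phi$, joined by an edge $\{x,C\}$ whenever $x\in C$; this bipartite graph is bi-regular with left degree $B$ and right degree $3$. Label $U_0$ by $[k_0]=\{0,1\}$ and $V_0$ by $[m_0]$, the set of satisfying assignments of a clause (at most $2^3$ of them), with $\pi_{Cx}\colon[m_0]\to[k_0]$ sending a satisfying assignment of $C$ to the value it gives $x$. Completeness is immediate from a global satisfying assignment, and for the NO case a short counting argument (if more than a $1-\eps_0/3$ fraction of edges were satisfied by labelings $f\colon U_0\to\{0,1\}$, $g\colon V_0\to[m_0]$ then more than a $1-\eps_0$ fraction of clauses $C$ would have $g(C)=f|_C$, hence be satisfied by $f$) shows the base game has value at most $1-\eps_0/3$ in the NO case.

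Next I would apply Raz's Parallel Repetition Theorem~\cite{Raz} to the $r$-fold product game. The question sets become $U_0^r$ and $V_0^r$ (so $|U|,|V|\le N^{O(r)}$), the label sets become $[k_0]^r$ and $[m_0]^r$ (so $k,m\le 2^{3r}$), the underlying graph is the $r$-fold tensor power of the base graph (hence bi-regular with degrees $B^r$ and $3^r$, both $2^{O(r)}$), and the whole reduction runs in deterministic time $N^{O(r)}$. Completeness passes coordinatewise, and Raz's theorem drives the NO-case value from $1-\eps_0/3$ down to $2^{-\gamma_0 r}$ for a universal $\gamma_0>0$ (the repetition rate depending only on the base value $\eps_0$ and the answer-alphabet sizes, all universal). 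This gives parts a and c.

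For part b, fix $v=(C_1,\dots,C_r)$ and a set $S\subseteq[m]$, and observe that $\pi_{vu}=\pi_{C_1x_1}\times\cdots\times\pi_{C_rx_r}$ where the neighbor $u=(x_1,\dots,x_r)$ is obtained by choosing the $x_t$ independently. Writing $|\pi_{vu}(S)|$ in terms of the fiber sizes of the restriction $\pi_{vu}|_S$ and using Jensen's inequality yields $\E_u\big[|\pi_{vu}(S)|^{-1}\big]\le \E_{a,a'}\big[\Pr_u[\pi_{vu}(a)=\pi_{vu}(a')]\big]$, with $a,a'$ drawn independently and uniformly from $S$; since the $x_t$ are independent and two \emph{distinct} satisfying assignments of a clause disagree on at least one of its three variables, the inner probability is at most $(2/3)^{|\{t : a_t \ne a'_t\}|}$. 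It then remains to show $\E_{a,a'}\big[(2/3)^{|\{t : a_t \ne a'_t\}|}\big]\le |S|^{-2c_0}$ for a universal $c_0>0$; this is an edge-isoperimetric-type inequality for the tensor-product ``agreement'' kernel, which one can prove by a compression argument reducing to the case where $S$ is a sub-box $\prod_t\mathcal{A}_t$, in which case the left-hand side factors over coordinates and collapses to the single-coordinate estimate --- a finite check over the $\le 2^3$ satisfying assignments of one clause (the binding case being $|\mathcal{A}_t|=2$). The tail bound in b then follows from Markov's inequality applied to $|\pi_{vu}(S)|^{-1}$. I expect the main obstacle to be exactly this last step --- lifting the one-coordinate projection estimate to the $r$-fold product with a constant $c_0$ that does not decay with $r$ (this is H\aa stad's structural property~\cite{Hastad}) --- while the PCP Theorem and Raz's theorem are otherwise invoked as black boxes.
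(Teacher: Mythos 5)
The paper does not prove Theorem~\ref{thm-LC} itself; it cites it as a combination of the PCP Theorem, Raz's Parallel Repetition Theorem, and a structural property of H\aa stad. Your account of parts~(a) and~(c) matches that standard derivation: regularize a gap-{\sc $3$SAT} instance from the PCP Theorem, form the clause--variable game with $k_0 = 2$, $m_0 \le 7$, take the $r$-fold repetition, and invoke Raz for the NO-case bound. That is all fine.

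The proposed proof of part~(b), however, has a genuine gap. You pass to collision probabilities via Cauchy--Schwarz (which you call Jensen), bound the collision probability by $(2/3)^{|\{t:a_t\ne a'_t\}|}$, and then propose to prove
$\E_{a,a'\sim S}\bigl[(2/3)^{d(a,a')}\bigr]\le |S|^{-2c_0}$
by compressing $S$ to a sub-box. This target inequality is false, so no compression argument can establish it. Take $S$ to be the Hamming ball of radius $1$ around a fixed point in $[m_0]^r$, so $|S| = 1 + r(m_0-1) \to \infty$ with $r$. Two independent uniform samples $a,a'$ from $S$ are, with probability $1-O(1/r)$, two nonzero perturbations in distinct coordinates, so $d(a,a') = 2$ and the expectation converges to $(2/3)^2 = 4/9$, a constant, while $|S|^{-2c_0}\to 0$. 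So while sub-boxes do satisfy the inequality (as you note, the single-coordinate check with $s_t \le m_0 \le 8$ gives a universal $c_0$), they are not extremal, and the compression step cannot go through. The deeper issue is that the Cauchy--Schwarz passage is itself far too lossy here: for the radius-$1$ ball one checks directly that $|\pi_{vu}(S)| = 1+r$ for every choice of $u$ (each coordinate always contributes at least one new image because the fibers of $\pi_{C_t x_t}$ have size at most $4 < m_0$), so $\E_u\bigl[|\pi_{vu}(S)|^{-1}\bigr] = 1/(r+1) = \Theta(1/|S|)$, whereas the collision probability you bound it by is $\Theta(1)$. The fibers of $\pi_{vu}|_S$ are highly unbalanced for such clustered $S$, which is exactly when Cauchy--Schwarz is loose. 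H\aa stad's actual structural lemma needs to exploit a lower bound on $|\pi_{vu}(S)|$ directly (e.g.\ by extracting a well-spread subcode of $S$ whose projection is typically injective, or by a coordinate-by-coordinate image-growth argument), rather than a second-moment/collision bound, and this is precisely where the work lies that your sketch does not supply.
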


\section{Independent Set in $2$-Colorable $4$-Uniform Hypergraphs}
\label{sec-hyper}
In this section we give a hardness reduction from an instance of \LC to a
$4$-uniform hypergraph proving Theorem \ref{thm-main1}.

The input is an instance $\mc{L}$  of \LC from Theorem \ref{thm-LC}
consisting of a
bipartite graph $G(U, V, E)$, label sets $[k]$ and $[m]$ and
projections $\{\pi_{vu} : [m] \mapsto [k]\ \mid\ \{u,v\}\in E, u\in U,
 v\in V\}$.  
The following is
the construction of the $4$-uniform hypergraph $\mc{G}(\mc{H}, \mc{E})$.

\medskip
\noindent
{\bf Vertices.} For each vertex $v \in V$, we have a copy  of the
binary Long Code over domain $[m]$, viz. $\mc{H}^v :=
\{-1,1\}^m$. Clearly the number of vertices in each $\mc{H}^v$ is the
same : $2^m$. The set of vertices $\mc{H}$ is the union of all the
copies, i.e. $\mc{H} =  \cup_{v \in V} \mc{H}^v$.

\medskip
\noindent
{\bf Hyperedges.} The hyperedges $\mc{E}$ are added via the following
procedure. 
\begin{itemize}
\item[1.] Choose a vertex $u \in U$ u.a.r and two of its
neighbors $v, w \in V$ independently and u.a.r. 

\item[2.] Let $x, x' \in \mc{H}^v$ and $y, y' \in \mc{H}^w$ be
chosen
as follows. For each $i \in [k]$, with probability 
$\frac{1}{2}$ do
Step 2a and with probability $\frac{1}{2}$ do Step 2b.
\begin{itemize}
\item[2a.] Independently for each $j \in \pi_{vu}^{-1}(i)$ choose 
$x_j$ u.a.r. from $\{-1,1\}$ and set $x'_j$ to be $-x_j$.
Independently for each $j \in  \pi_{wu}^{-1}(i)$ choose $y_j$ 
and $y'_j$ independently and u.a.r. from $\{-1,1\}$. 

\item[2b.] Independently for each $j \in \pi_{wu}^{-1}(i)$ choose 
$y_j$ u.a.r. from $\{-1,1\}$ and set $y'_j$ to be $-y_j$.
Independently for each $j \in  \pi_{vu}^{-1}(i)$ choose $x_j$ 
and $x'_j$ independently and u.a.r. from $\{-1,1\}$.
\end{itemize}

\item[3.] For all possible choices of $u \in U$,  $v, w \in V$, 
 $x, x' \in \mc{H}^v$ and $y, y' \in \mc{H}^w$ in the above
steps, add a hyperedge between $x, x', y, y'$.
\end{itemize}

\subsection{Analysis: YES Case}
In the YES case there is a labeling $\sigma : V \mapsto [m]$ such that
for any $u \in U$ and neighbors $v, w$ of $u$, $\pi_{vu}(\sigma(v)) =
\pi_{wu}(\sigma(w))$. We partition the vertex set $\mc{H}$ into
two disjoint subsets $\mc{H}_{-1}$ and $\mc{H}_{1}$ where,
$\mc{H}_
\ell\cap\mc{H}^v
= \{ z \in \mc{H}^v\ \mid\ z_{\sigma(v)} = \ell\}$, 
for $\ell \in \{-1,1\}$. 

Consider a choice of $u \in U$ and two of its neighbors $v$ and $w$ in
Step 1 of the hyperedges construction. Steps 2a and 2b ensure that
either $x_{\sigma(v)} = -x'_{\sigma(v)}$ or  $y_{\sigma(w)} =
-y'_{\sigma(w)}$, as $\pi_{vu}(\sigma(u)) = \pi_{wu}(\sigma(w))$.
Thus, no hyperedge lies completely in either $\mc{H}_{-1}$
or $\mc{H}_{1}$ and hypergraph $\mc{G}$ is $2$-colorable.

\subsection{Analysis: NO Case}
Suppose for a contradiction that there is an independent set $\mc{I}
\subseteq \mc{H}$ such that $|\mc{I}| \geq \delta |\mc{H}|$. Our
analysis shall show that this implies a labeling to the \LC instance
$\mc{L}$ that satisfies $\delta^{O(1)}$ fraction of its edges. 
This is in contrast to the bound of
$\delta^{O(\delta^{-1})}$ obtained in \cite{GHS}, \cite{Holmerin}.

By
averaging, for at least $\delta/2$ fraction of the vertices $v \in V$, 
$|\mc{I}\cap \mc{H}^v| \geq
(\delta/2)|\mc{H}^v|$. Call such vertices as ``good''
vertices. We use $\mc{I}^v$ to denote $\mc{I}\cap
\mc{H}^v$ for any $v\in V$.

For now fix a choice of ``good'' vertices $v$ and $w$ that share a 
neighbor $u
\in U$. Let $A : \mc{H}^v \mapsto \{0,1\}$ be the indicator of the
subset $\mc{I}^v$. Similarly, let $B : \mc{H}^w \mapsto
\{0,1\}$ be the indicator for $\mc{I}^w$. Thus we have,
\begin{equation}
\E_{x \in \mc{H}^v}[A(x)] \geq \delta/2 \ \ \ \ ,\ \ \ \ 
\E_{y \in \mc{H}^w}[B(y)] \geq \delta/2. \label{eqn-delta}
\end{equation}
Furthermore, since $\mc{I}$ is an independent set, we have,
\begin{equation}
\E_{x,x',y,y'}\left[A(x)A(x')B(y)B(y')\right] = 0,
\end{equation}
where the expectation is according to the distribution induced by
Steps 2, 2a and 2b of the hyperedges construction. 
Expanding out the Fourier expansion of
the above product we obtain,
\begin{equation}
\E_{x,x',y,y'}\left[\sum_{\substack{\alpha,\alpha', \\
\beta,\beta'\subseteq
[m]}}\wh{A}_{\alpha}\wh{A}_{\alpha'}\wh{B}_{\beta}\wh{B}_{\beta'}\chi_{\alpha}(x)\chi_{\alpha'}(x')\chi_\beta(y)\chi_{\beta'}(y')\right]
= 0. \nonumber
\end{equation}
Dropping the subscripts from the expectation and taking it inside
summation,
\begin{equation}
\sum_{\substack{\alpha,\alpha', \\
\beta,\beta'\subseteq
[m]}}\wh{A}_{\alpha}\wh{A}_{\alpha'}\wh{B}_{\beta}\wh{B}_{\beta'}
\E\left[\chi_{\alpha}(x)
\chi_{\alpha'}(x')\chi_\beta(y)\chi_{\beta'}(y')\right] = 0.
\label{eqn-expand0}
\end{equation}
Using the properties of the distribution induced in Steps 2-2b of the
construction, we have the following lemma.
\begin{lemma} \label{lem-zerointer}
Unless $\alpha = \alpha'$, $\beta = \beta'$ and $\pi_{vu}(\alpha)\cap
\pi_{wu}(\beta) = \emptyset$, 
\begin{equation*}
\E\left[\chi_{\alpha}(x)
\chi_{\alpha'}(x')\chi_\beta(y)\chi_{\beta'}(y')\right] = 0.
\end{equation*}
\end{lemma}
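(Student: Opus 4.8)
The plan is to compute the expectation $\E\left[\chi_\alpha(x)\chi_{\alpha'}(x')\chi_\beta(y)\chi_{\beta'}(y')\right]$ directly from the description of the distribution in Steps 2--2b, by exploiting the independence across the blocks indexed by $i \in [k]$. First I would observe that the characters factor over coordinates: $\chi_\alpha(x)\chi_{\alpha'}(x') = \prod_{j\in[m]} x_j^{\mathbbm{1}[j\in\alpha]} (x'_j)^{\mathbbm{1}[j\in\alpha']}$, and similarly for $y,y'$. Grouping the coordinates of $[m]$ (on the $v$-side) into the fibers $\pi_{vu}^{-1}(i)$ and the coordinates on the $w$-side into the fibers $\pi_{wu}^{-1}(i)$, the whole product becomes a product over $i\in[k]$ of a term $T_i$ depending only on the variables $\{x_j, x'_j : j\in\pi_{vu}^{-1}(i)\}$ and $\{y_j, y'_j : j\in\pi_{wu}^{-1}(i)\}$. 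Since Step 2 makes an independent fair coin flip for each $i$, and conditioned on that flip the variables in block $i$ are generated independently of all other blocks, the expectation factors as $\prod_{i\in[k]} \E[T_i]$. So it suffices to show $\E[T_i]=0$ for at least one $i$ whenever the hypothesis of the lemma fails.

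Next I would analyze a single block $i$. Condition on the Step-2 coin: with probability $\tfrac12$ we are in case 2a and with probability $\tfrac12$ in case 2b. In case 2a, the $x$-coordinates in the fiber satisfy $x'_j=-x_j$ with $x_j$ uniform, while the $y$- and $y'$-coordinates in the corresponding $w$-fiber are all independent uniform signs; in case 2b the roles are reversed. For case 2a: the $x$-part of $T_i$ contributes $\prod_{j\in\pi_{vu}^{-1}(i)} x_j^{\mathbbm{1}[j\in\alpha]}(-x_j)^{\mathbbm{1}[j\in\alpha']}$, whose expectation over the uniform $x_j$ is $\pm 1$ if $\alpha$ and $\alpha'$ agree on this fiber (i.e. $\alpha\cap\pi_{vu}^{-1}(i)=\alpha'\cap\pi_{vu}^{-1}(i)$) and $0$ otherwise; meanwhile the $y,y'$ expectation is $1$ iff $\beta$ and $\beta'$ are both empty on the fiber $\pi_{wu}^{-1}(i)$, and $0$ otherwise. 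The symmetric statement holds in case 2b with $v\leftrightarrow w$, $\alpha\leftrightarrow\beta$, $\alpha'\leftrightarrow\beta'$. Hence $\E[T_i] = \tfrac12(\text{case 2a value}) + \tfrac12(\text{case 2b value})$.

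Now I would do the case analysis that forces the conclusion. Suppose $\E[T_i]\neq 0$ for every $i$. From case 2a being nonzero for block $i$ we get $\alpha,\alpha'$ agree on $\pi_{vu}^{-1}(i)$ and $\beta\cap\pi_{wu}^{-1}(i)=\beta'\cap\pi_{wu}^{-1}(i)=\emptyset$; from case 2b we get $\beta,\beta'$ agree on $\pi_{wu}^{-1}(i)$ and $\alpha\cap\pi_{vu}^{-1}(i)=\alpha'\cap\pi_{vu}^{-1}(i)=\emptyset$ — but for $\E[T_i]$ to be nonzero we need the \emph{sum} $\tfrac12 a + \tfrac12 b$ to be nonzero, so at least one of the two case-values is nonzero, and whichever it is, it forces: (i) $\alpha$ and $\alpha'$ agree on the fiber $\pi_{vu}^{-1}(i)$, (ii) $\beta$ and $\beta'$ agree on the fiber $\pi_{wu}^{-1}(i)$, and (iii) not both $i\in\pi_{vu}(\alpha)$ and $i\in\pi_{wu}(\beta)$. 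Running over all $i\in[k]$ (the fibers partition $[m]$ on each side), (i) and (ii) give $\alpha=\alpha'$ and $\beta=\beta'$, while (iii) gives $\pi_{vu}(\alpha)\cap\pi_{wu}(\beta)=\emptyset$. This is exactly the negation of the lemma's hypothesis, so contrapositively the lemma follows. I expect the main delicacy to be bookkeeping the "at least one of the two cases is nonzero" step carefully — one must check that each surviving case independently enforces conditions (i)--(iii), so that the $\tfrac12$-mixture can only be nonzero when all three hold; once the per-block factorization is set up, the rest is a routine parity computation.
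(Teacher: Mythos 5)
Your proof is correct and takes essentially the same approach as the paper's: both hinge on the per-block independence of the construction and the fair coin flip in Step~2. You are slightly more systematic, factoring the expectation fully into a product $\prod_{i\in[k]}\E[T_i]$ and characterizing when each factor can be nonzero, whereas the paper's proof is more targeted (for $\alpha\neq\alpha'$ or $\beta\neq\beta'$ it picks an unmatched coordinate that is independent uniform; for the intersection case it picks a single $i\in\pi_{vu}(\alpha)\cap\pi_{wu}(\beta)$ and conditions on that block's coin), but the underlying reasoning is the same.
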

\begin{proof}
It can be seen that 
$x_j$ and $x'_{j'}$ are independent for $j' \neq j$, and either one is
independent of $y$ and $y'$. Thus the
expectation vanishes if $\alpha \neq \alpha'$. An identical argument
handles the case when $\beta \neq \beta'$. 

We may assume that $\alpha = \alpha'$ and $\beta = \beta'$. Consider 
the case when $i \in \pi_{vu}(\alpha)\cap\pi_{wu}(\beta)$. From the
construction, in Step 2a, the variables $\{y_j, y'_j | j \in
\pi_{wu}^{-1}(i)\cap\beta\}$ are chosen independently and u.a.r. from
$\{-1,1\}$. Otherwise, in Step 2b, the variables $\{x_j, x'_j | j \in
\pi_{vu}^{-1}(i)\cap\alpha\}$ are chosen independently and u.a.r. from
$\{-1,1\}$. In both cases the expectation vanishes. 
\end{proof}
Observe that  $\pi_{vu}(\alpha)\cap
\pi_{wu}(\beta) = \emptyset$ implies that the variable $\chi_{\alpha}(x)
\chi_{\alpha}(x') = \chi_\alpha(xx')$ is independent of
$\chi_{\beta}(y)\chi_{\beta}(y') = \chi_\beta(yy')$. For convenience
we use the following notation:
\begin{equation} 
 \Gamma^{vu}_\alpha := \E\left[ \chi_\alpha(xx')\right] \ \ \ \ \
\textnormal{and} \ \ \ \ \  
\Gamma^{wu}_\beta := \E\left[ \chi_\beta(yy')\right].
\label{eqn-defgamma}
\end{equation} 
Note that 
$\Gamma^{vu}_\alpha$ and $\Gamma^{wu}_\beta$ depend on the projections
$\pi_{vu}$ and $\pi_{wu}$ respectively. Using Lemma
\ref{lem-zerointer} and Equation
\eqref{eqn-expand0} we obtain,
\begin{equation}
\sum_{\substack{\alpha,\beta \\ \pi_{vu}(\alpha)\cap\pi_{wu}(\beta) =
\emptyset}}\wh{A}_{\alpha}^2\wh{B}_{\beta}^2
\Gamma^{vu}_\alpha\Gamma^{wu}_\beta = 0.
\label{eqn-expand1}
\end{equation}
Using standard arguments along with the fact that $x_i$ is independent
of $x'_j$ for $i \neq j$ we obtain,
\begin{eqnarray}
\E_{x,x'}\left[A(x)A(x')\right] & = &
\sum_{\alpha}\wh{A}_\alpha^2
\E\left[\chi_\alpha(xx')\right] \nonumber \\
& = & \sum_{\alpha}\wh{A}_\alpha^2
\Gamma^{vu}_\alpha. \label{eqn-alphaprod}
\end{eqnarray}
Similarly,
\begin{eqnarray}
\E_{y,y'}\left[B(y)B(y')\right] & = &
\sum_{\beta}\wh{B}_\beta^2
\Gamma^{wu}_\beta. \label{eqn-betaprod}
\end{eqnarray}
To use the above equalities, the goal of the next lemma is to lower bound
$\E[A(x)A(x')]$ and $\E[B(y)B(y')]$. 
\begin{lemma}\label{lem-lowerbd}
For $x,x',y$ and $y'$ as distributed in Steps 2-2b of the construction
of the hyperedges,
$$\E[A(x)A(x')] \geq (\delta/2)^{c_1} \ \ \ \ , \ \ \ \
\E[B(y)B(y')] \geq (\delta/2)^{c_1},$$
where $c_1 = \frac{2\sqrt{2} - 1}{\sqrt{2} - 1}$ is an absolute
constant.
\end{lemma}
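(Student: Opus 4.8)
The plan is to recognize the distribution of the pair $(x,x')$ produced by Steps 2--2b as a $\rho$-correlated pair, up to a global sign flip, over the right block product space, and then invoke Theorem \ref{thm-MOS2}. First I would set up the product structure: view $\mc{H}^v = \{-1,1\}^m$ as the product space $(\Omega,\mu) = \prod_{i\in[k]}(\Omega_i,\mu_i)$ where $\Omega_i = \{-1,1\}^{\pi_{vu}^{-1}(i)}$ carries the uniform measure (blocks with $\pi_{vu}^{-1}(i)=\emptyset$ being trivial singletons). Inspecting Steps 2, 2a and 2b, for each $i\in[k]$ independently the block of $x$ indexed by $\pi_{vu}^{-1}(i)$ is uniform, and with probability $\tfrac12$ the corresponding block of $x'$ equals $-x$ on that block (Step 2a) while with probability $\tfrac12$ it is an independent uniform string (Step 2b). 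Hence, writing $z := -x'$, the pair $(x,z)$ is exactly a $\rho$-correlated pair with $\rho=\tfrac12$ in the sense of Definition \ref{def-1}: $x\sim\mu$, and for each block $i$ independently, with probability $\tfrac12$ we have $z_i = x_i$ and with probability $\tfrac12$ the block $z_i$ is resampled from $\mu_i$.

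Next I would translate $\E[A(x)A(x')]$ into a two-set hitting probability. Let $\mc{A} := \mc{I}^v$ be the support of $A$, so that $\mu\{\mc{A}\}\geq\delta/2$ by \eqref{eqn-delta}. Since the uniform measure on $\{-1,1\}^m$ is invariant under global negation, the set $\mc{A}' := -\mc{A} = \{-a : a\in\mc{A}\}$ also satisfies $\mu\{\mc{A}'\}\geq\delta/2$, and $x'\in\mc{A}$ if and only if $z=-x'\in\mc{A}'$. Therefore
\begin{equation*}
\E_{x,x'}\left[A(x)A(x')\right] = \Pr\left[x\in\mc{A}\ \text{and}\ z\in\mc{A}'\right].
\end{equation*}
Applying Theorem \ref{thm-MOS2} to the sets $\mc{A},\mc{A}'\subseteq\Omega$ with $\rho=\tfrac12$, using that $z$ is a $\tfrac12$-correlated copy of $x$, gives
\begin{equation*}
\E_{x,x'}\left[A(x)A(x')\right] \geq (\delta/2)^{\frac{2-\sqrt{1/2}}{1-\sqrt{1/2}}},
\end{equation*}
and multiplying the numerator and denominator of the exponent by $\sqrt2$ shows $\frac{2-\sqrt{1/2}}{1-\sqrt{1/2}} = \frac{2\sqrt2-1}{\sqrt2-1} = c_1$, which is the claimed bound for $A$.

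The bound for $B$ follows by the identical argument with the roles of $v$ and $w$ (equivalently, Steps 2a and 2b) interchanged: over the blocks $\{\pi_{wu}^{-1}(i)\}_{i\in[k]}$ the marginal of $(y,y')$ has precisely the same form, so $-y'$ is a $\tfrac12$-correlated copy of $y$, and Theorem \ref{thm-MOS2} applied with $\mc{B} := \mc{I}^w$ and $-\mc{B}$ yields $\E[B(y)B(y')] \geq (\delta/2)^{c_1}$.

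I expect the only genuine subtlety to be the bookkeeping in the first step: one must pass to the product space whose coordinates are the blocks indexed by $[k]$ (not the $m$ individual Long Code coordinates) and introduce the global sign flip $z=-x'$ so that Step 2b matches the ``resample'' operation of Definition \ref{def-1}. The appeal to uniform reverse hypercontractivity is already packaged inside Theorem \ref{thm-MOS2} and costs nothing here, which is essential because $\Omega_i$ can have an atom of measure as small as $2^{-|\pi_{vu}^{-1}(i)|}$, and an ordinary hypercontractivity-based bound would degrade with that quantity.
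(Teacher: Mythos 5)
Your proposal is correct and matches the paper's proof essentially line for line: the same block product space $(\Omega,\mu)=\prod_{i\in[k]}(\{-1,1\}^{\pi_{vu}^{-1}(i)},\tn{uniform})$, the same global sign flip $z=-x'$ to recast Steps 2a/2b as a $\tfrac12$-correlated resampling in the sense of Definition~\ref{def-1}, and the same appeal to Theorem~\ref{thm-MOS2} with $\mc{I}^v$ and $-\mc{I}^v$. The explicit check that $\frac{2-\sqrt{1/2}}{1-\sqrt{1/2}}=\frac{2\sqrt{2}-1}{\sqrt{2}-1}=c_1$ is a welcome detail the paper leaves implicit.
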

\begin{proof} Let us consider $\E[A(x)A(x')]$. The proof for $\E[B(y)B(y')]$ is
analogous. Recall that $A$ is the indicator for
$\mc{I}^v \subseteq \mc{H}^v$. Let $-\mc{I}^v := \{-x \ \mid\
x\in\mc{I}^v\}$.  It is easy to see that,
\begin{equation}
\E\left[A(x)A(x')\right] =  \Pr\left[x \in \mc{I}^v, x'\in
\mc{I}^v\right] 
 =  \Pr\left[x \in \mc{I}^v, -x' \in -\mc{I}^v\right] 
 =   \Pr\left[x \in \mc{I}^v, x'' \in -\mc{I}^v\right], 
\label{eqn-minusx}
\end{equation}
where we use $x''$ to denote $-x'$. 

Consider the product probability space $(\Omega, \mu) = (\prod_{i=1}^k
\Omega_i, \otimes_{i=1}^k \mu_i)$, where for each $i \in [k]$,
$\Omega_i = \{-1,1\}^{\pi_{vu}^{-1}(i)}$ and $\mu_i$ is the uniform
measure. Thus, a
uniformly random $x \in
\mc{H}^v$ (as chosen in Steps 2-2b of the construction) 
can be thought of as belonging to $(\Omega, \mu)$ with
$x|_{\pi_{vu}^{-1}(i)}$ being drawn from $(\Omega_i, \mu_i)$
independently for each $i \in [k]$. In Equation \eqref{eqn-minusx},
both $x$ and $x''$ have uniform marginals distributions. Furthermore,
given $x$, independently for each $i \in [k]$, with probability
$\frac{1}{2}$,
$x''|_{\pi_{vu}^{-1}(i)} =  x|_{\pi_{vu}^{-1}(i)}$ and with
probability $\frac{1}{2}$, $x''|_{\pi_{vu}^{-1}(i)}$ is chosen
uniformly from $(\Omega_i, \mu_i)$. Thus, $x$ and $x''$ are
$\rho$-correlated elements of $(\Omega, \mu)$ with $\rho = \frac{1}{2}$,
according to Definition \ref{def-1}. Since $\mu(\mc{I}^v) = 
\mu(-\mc{I}^v) \geq \delta$, applying Theorem \ref{thm-MOS2} to
Equation \ref{eqn-minusx} we obtain,
\begin{equation}
\E\left[A(x)A(x')\right] =  
\Pr\left[x \in \mc{I}^v, x'' \in -\mc{I}^v\right] \geq
(\delta/2)^{c_1}, \label{eqn-lowerbd}
\end{equation}
which completes the proof of the lemma.
\end{proof}
Using the above lemma along with Equations \eqref{eqn-alphaprod} and
\eqref{eqn-betaprod}   we have,
\begin{equation}
\left(\frac{\delta}{2}\right)^{2c_1} \leq
\left(\sum_{\alpha}\wh{A}_\alpha^2\Gamma^{vu}_\alpha\right) 
\left(\sum_{\beta}\wh{B}_\beta^2\Gamma^{wu}_\beta\right)
 = \sum_{\alpha,
\beta}\wh{A}_\alpha^2\wh{B}_\beta^2\Gamma^{vu}_\alpha\Gamma^{wu}_\beta.
\label{eqn-prodlower}
\end{equation}
Subtracting Equation \eqref{eqn-expand1} from Equation
\eqref{eqn-prodlower}, we obtain,
\begin{equation}
\sum_{\substack{\alpha,\beta \\ \pi_{vu}(\alpha)\cap\pi_{wu}(\beta)
\neq
\emptyset}}\wh{A}_{\alpha}^2\wh{B}_{\beta}^2
\Gamma^{vu}_\alpha\Gamma^{wu}_\beta \geq
\left(\frac{\delta}{2}\right)^{2c_1}. \label{eqn-interlower}
\end{equation}
To continue with the analysis we calculate
$\Gamma^{vu}_\alpha$ and $\Gamma^{wu}_\beta$ in the
following lemma.
\begin{lemma}\label{lem-gamma}
Let, 
$$
\pi_{vu}^{{\sf odd}}(\alpha) := \left\{i \in \pi_{vu}(\alpha) |
\left|\pi_{vu}^{-1}(i)\cap \alpha\right| \textnormal{ is odd}\right\},
\ \ \tn{and,} \ \
\pi_{wu}^{{\sf odd}}(\beta) := \left\{i \in \pi_{wu}(\beta) |
\left|\pi_{wu}^{-1}(i)\cap \beta \right| \textnormal{ is odd}\right\}.
$$
Then,
\begin{equation*}
\Gamma^{vu}_{\alpha} = \frac{(-1)^{\left|\pi_{vu}^{{\sf
odd}}(\alpha)\right|}}{2^{|\pi_{vu}(\alpha)|}}, \ \ \
\textnormal{and,} \ \ \ 
\Gamma^{wu}_{\beta} =  \frac{(-1)^{\left|\pi_{wu}^{{\sf
odd}}(\beta)\right|}}{2^{|\pi_{wu}(\beta)|}}.
\end{equation*}
\end{lemma}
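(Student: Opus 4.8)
The plan is to exploit the fact that, for fixed $u$ and a neighbor $v$, the joint distribution of $(x,x')$ produced by Steps 2--2b factorizes over the blocks $\pi_{vu}^{-1}(i)$, $i\in[k]$, which partition $[m]$. First I would write $\chi_\alpha(xx')=\prod_{j\in\alpha}x_jx'_j$ and group the coordinates of $\alpha$ by the block containing them, giving
\[
\chi_\alpha(xx') \;=\; \prod_{i\in\pi_{vu}(\alpha)}\ \prod_{j\in\pi_{vu}^{-1}(i)\cap\alpha} x_j x'_j .
\]
Since Steps 2--2b flip an \emph{independent} fair coin for each $i\in[k]$ and then act only on the disjoint coordinate set $\pi_{vu}^{-1}(i)$ (and on $\pi_{wu}^{-1}(i)$, which is irrelevant to the $x$-marginal), the pairs $\bigl(x|_{\pi_{vu}^{-1}(i)},\,x'|_{\pi_{vu}^{-1}(i)}\bigr)$ are mutually independent across $i$. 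Hence
\[
\Gamma^{vu}_\alpha \;=\; \prod_{i\in\pi_{vu}(\alpha)} \E\Bigl[\prod_{j\in\pi_{vu}^{-1}(i)\cap\alpha} x_j x'_j\Bigr].
\]

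Next I would evaluate each per-block factor. Fix $i\in\pi_{vu}(\alpha)$ and set $S_i:=\pi_{vu}^{-1}(i)\cap\alpha$, so $s_i:=|S_i|\ge 1$. Conditioned on the coin for block $i$ selecting Step~2a (probability $\tfrac12$), one has $x'_j=-x_j$ for all $j\in\pi_{vu}^{-1}(i)$, so $x_jx'_j=-1$ and $\prod_{j\in S_i}x_jx'_j=(-1)^{s_i}$ deterministically. Conditioned on Step~2b (probability $\tfrac12$), the variables $\{x_j,x'_j:j\in\pi_{vu}^{-1}(i)\}$ are i.i.d.\ uniform, and because $s_i\ge1$ the product $\prod_{j\in S_i}x_jx'_j$ contains at least one factor $x_j$ of mean zero, so its expectation is $0$. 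Averaging over the coin yields $\E\bigl[\prod_{j\in S_i}x_jx'_j\bigr]=\tfrac12(-1)^{s_i}$.

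Multiplying over all $i\in\pi_{vu}(\alpha)$ then gives
\[
\Gamma^{vu}_\alpha \;=\; \prod_{i\in\pi_{vu}(\alpha)}\frac{(-1)^{s_i}}{2}
\;=\; \frac{\prod_{i\in\pi_{vu}(\alpha)}(-1)^{s_i}}{2^{|\pi_{vu}(\alpha)|}},
\]
and since $(-1)^{s_i}=-1$ exactly when $s_i$ is odd, the numerator equals $(-1)^{|\pi_{vu}^{{\sf odd}}(\alpha)|}$, which is the claimed expression; the formula for $\Gamma^{wu}_\beta$ follows by the symmetric argument with the roles of Steps~2a and 2b (and of $v,w$) interchanged. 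There is no genuine obstacle in this lemma; the only points that merit care are (i) verifying that Steps~2--2b really render the blocks independent across $i$ --- immediate, since a fresh coin governs each $i$ and the sets $\pi_{vu}^{-1}(i)$ are disjoint --- and (ii) noting that in the Step~2a branch the contributed sign is $(-1)^{s_i}$, depending only on the coordinates of $\alpha$ inside block $i$, and not on $|\pi_{vu}^{-1}(i)|$.
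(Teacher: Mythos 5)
Your proof is correct and matches the paper's own argument: both decompose $\chi_\alpha(xx')$ block-by-block using the independence of the per-$i$ coins, evaluate the per-block expectation as $\tfrac12(-1)^{s_i}$ by conditioning on Step~2a versus Step~2b, and then multiply over $i\in\pi_{vu}(\alpha)$ (the paper formally multiplies over all $i\in[k]$, with empty blocks contributing $1$). The symmetric treatment of $\Gamma^{wu}_\beta$ with Steps~2a and 2b swapped is likewise exactly what the paper does.
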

\begin{proof}
From the definition of $\Gamma^{vu}_\alpha$ we can rewrite it as,
\begin{equation}
\Gamma^{vu}_\alpha = \E\left[\prod_{i \in [k]} \left(\prod_{j \in
\pi_{vu}^{-1}(i)\cap \alpha}x_jx'_j\right)\right] 
= \prod_{i \in [k]} \E\left[\prod_{j \in
\pi_{vu}^{-1}(i)\cap \alpha}x_jx'_j\right].\label{eqn-expin}
\end{equation}
For a given $i \in [k]$, with probability $\frac{1}{2}$ all the
variables $x_j, x'_j$ ($j \in \pi_{vu}^{-1}(i)$) are uniformly random
and independent. Otherwise, $x_j$ ($j \in \pi_{vu}^{-1}(i)$)
are chosen independently u.a.r and each $x'_j$ is set to  $-x_j$.
Thus, 
\begin{equation}
\E\left[\prod_{j \in \pi_{vu}^{-1}(i)\cap \alpha}x_jx'_j\right] =
\begin{cases}
\frac{1}{2} & \textnormal{ if } |\pi_{vu}^{-1}(i)\cap \alpha|
\textnormal{ is even.} \\
-\frac{1}{2} & \textnormal{ otherwise.}
\end{cases}
\end{equation}
Substituting the above in Equation \eqref{eqn-expin} proves the lemma
for  $\Gamma^{vu}_\alpha$. The proof for $\Gamma^{wu}_\beta$ is analogous.
\end{proof}
Let $R$ and $T$ ($R > T$) be positive integers to be determined later.
Using the above lemma and Equation \eqref{eqn-interlower} we obtain,
\begin{align}
\sum_{\substack{|\alpha| < R,|\beta| < R \\ \pi_{vu}(\alpha)\cap\pi_{wu}(\beta)
\neq
\emptyset}}\wh{A}_{\alpha}^2\wh{B}_{\beta}^2  + &  
\sum_{\substack{\big[ (|\alpha| \geq R, |\pi_{vu}(\alpha)| < T) \\ \vee 
(|\beta| \geq R, |\pi_{wu}(\beta)| < T)\big], \\ \pi_{vu}(\alpha)\cap\pi_{wu}(\beta)
\neq
\emptyset}}\wh{A}_{\alpha}^2\wh{B}_{\beta}^2 \nonumber \\
 + & 
\sum_{\substack{\big[ (|\alpha| \geq R, |\pi_{vu}(\alpha)| \geq T) \\ \vee 
(|\beta| \geq R, |\pi_{wu}(\beta)| \geq T)\big], \\ \pi_{vu}(\alpha)\cap\pi_{wu}(\beta)
\neq
\emptyset}}\wh{A}_{\alpha}^2\wh{B}_{\beta}^22^{-\left(|\pi_{vu}(\alpha)| +
|\pi_{wu}(\beta)|\right)}
\geq
\left(\frac{\delta}{2}\right)^{2c_1}. \label{eqn-splitlower}
\end{align}
The third term on the LHS of the above
inequality is at most
$2^{-T}\left(\sum_{\alpha}\wh{A}_\alpha^2\right)
\left(\sum_{\beta}\wh{B}_\beta^2\right)\ \leq\ 2^{-T}$, 
using Parseval's identity and the fact that 
$A$ and $B$ are indicator functions.
Similarly, the second term in the LHS of Equation
\eqref{eqn-splitlower} is upper bounded by,
$$\left(\sum_{\substack{|\alpha| \geq R, \\ |\pi_{vu}(\alpha)| <
T}}\wh{A}_\alpha^2\right)\left(\sum_\beta \wh{B}_\beta^2\right) +
\left(\sum_\alpha \wh{A}_\alpha^2\right)\left(\sum_{\substack{|\beta|
\geq R, \\ 
|\pi_{wu}(\beta)| < T}}\wh{B}_\beta^2\right) \leq 
\sum_{\substack{|\alpha| \geq R,\\ \pi_{vu}(\alpha) <
T}}\wh{A}_\alpha^2 + \sum_{\substack{|\beta| \geq R,\\ \pi_{wu}(\beta) <
T}}\wh{B}_\beta^2.$$
Substituting the above in Equation \eqref{eqn-splitlower} we obtain
that for any two good vertices $v, w \in V$ which share a neighbor $u
\in U$, 
\begin{equation}
\sum_{\substack{|\alpha| < R,|\beta| < R \\ \pi_{vu}(\alpha)\cap\pi_{wu}(\beta)
\neq
\emptyset}}\wh{A}_{\alpha}^2\wh{B}_{\beta}^2  
+ \sum_{\substack{|\alpha| \geq R,\\ |\pi_{vu}(\alpha)| <
T}}\wh{A}_\alpha^2 + \sum_{\substack{|\beta| \geq R,\\
|\pi_{wu}(\beta)| <
T}}\wh{B}_\beta^2\ \geq\ \left(\frac{\delta}{2}\right)^{2c_1} - 2^{-T}.
\label{eqn-fingood}
\end{equation}
Consider the following process of selecting $u, v$ and $w$. Choose $u$
u.a.r from $U$ and $v$ and $w$ be two neighbors of $u$ chosen
independently and u.a.r from its neighborhood. Let $p_u$ be the
fraction of the neighbors of $u$ that are ``good''. Since, $\delta/2$
fraction of the vertices in $V$ are good and the graph $G(U, V, E)$ is
bi-regular, $\E_{u\in U}[p_u] \geq (\delta/2)$. Thus, the probability
that both $v$ and $w$ are ``good'' is $\E_u[p_u^2] \geq
\left(\E_u[p_u]\right)^2 \geq (\delta/2)^2$. Taking an expectation
over the choice of $u, v$ and $w$, and noting that the LHS of Equation
\eqref{eqn-fingood} is always positive, we obtain,
\begin{equation}
\E_{u,v,w}\left[ 
\sum_{\substack{|\alpha| < R,|\beta| < R \\ \pi_{vu}(\alpha)\cap\pi_{wu}(\beta)
\neq
\emptyset}}\wh{A}_{\alpha}^2\wh{B}_{\beta}^2 
+ \sum_{\substack{|\alpha| \geq R,\\ \pi_{vu}(\alpha) <
T}}\wh{A}_\alpha^2 + \sum_{\substack{|\beta| \geq R,\\ \pi_{wu}(\beta) <
T}}\wh{B}_\beta^2\right]  
\geq
\left(\frac{\delta}{2}\right)^2\left[\left(\frac{\delta}{2}\right)^{2c_1}
- 2^{-T}\right]. \label{eqn-expectbd}
\end{equation}
In order to bound the second and third terms in the above expectation
we use property (b) in Theorem \ref{thm-LC}. For a fixed
vertex $v \in V$ and subset $\alpha \subseteq [m]$ such that $|\alpha|
\geq R$,
\begin{equation}
\Pr_u\left[\left|\pi_{vu}(\alpha)\right| < R^{c_0}\right] \ \leq\
\frac{1}{R^{c_0}},
\end{equation}
where the probability is over a random neighbor $u$ of $v$. Thus,
\begin{equation}
\E_u\left[ \sum_{\substack{|\alpha| \geq R,\\ |\pi_{vu}(\alpha)| <
R^{c_0}}}\wh{A}_\alpha^2\right] =  \sum_{|\alpha| \geq R}
\left[\wh{A}_\alpha^2 \cdot \Pr_u\left[\left|\pi_{vu}(\alpha)\right| <
R^{c_0}\right]\right] \leq  \sum_{|\alpha| \geq R} \wh{A}_\alpha^2\cdot
\frac{1}{R^{c_0}}\ \leq\ \frac{1}{R^{c_0}}. \label{eqn-pilargebd}
\end{equation}
Setting $T = R^{c_0}$ and substituting the above into Equation
\eqref{eqn-expectbd} we obtain,
\begin{equation}
\E_{u,v,w}\left[ 
\sum_{\substack{|\alpha| < R,|\beta| < R \\ \pi_{vu}(\alpha)\cap\pi_{wu}(\beta)
\neq
\emptyset}}\wh{A}_{\alpha}^2\wh{B}_{\beta}^2 \right] 
\geq
\left(\frac{\delta}{2}\right)^2\left[\left(\frac{\delta}{2}\right)^{2c_1}
- 2^{-R^{c_0}}\right] - \frac{2}{R^{c_0}}. \label{eqn-finexpbd}
\end{equation}
Let $c' = 2 + 2c_1$. Setting $R = 8/(\delta/2)^{c'/c_0}$ and using $2^{-R^{c_0}} \leq R^{-c_0}$ in the above
inequality yields,
\begin{equation}
\E_{u,v,w}\left[ 
\sum_{\substack{|\alpha|, |\beta| < 8/(\delta/2)^{c'/c_0} 
\\ \pi_{vu}(\alpha)\cap\pi_{wu}(\beta)
\neq
\emptyset}}\wh{A}_{\alpha}^2\wh{B}_{\beta}^2 \right] \geq
\frac{1}{4}\left(\frac{\delta}{2}\right)^{c'}. \label{eqn-finbd}
\end{equation}

\medskip
\noindent
{\bf Labeling.} The above analysis yields the following randomized
labeling $\sigma$ of the vertices of $\mc{L}$. For a vertex $v \in V$, choose
a subset $\alpha$ probability $\wh{A}_\alpha^2$ and assign as label
$\sigma(v)$ a
random $i \in \alpha$. For any vertex $u \in U$, randomly choose a
neighbor $w$ and assign $\pi_{wu}(\sigma(w))$ as the label to $u$.
Equation \eqref{eqn-finbd} implies that the expected fraction of
constraints satisfied is at least,
\begin{equation}
\frac{1}{256}\left(\frac{\delta}{2}\right)^{c' + 2c'/c_0}.
\end{equation}

\subsubsection{Choice of parameters}
In Theorem \ref{thm-LC} we can choose $r = (\log \log N)/4$. This
ensures that the instance $\mc{G}$ is of size $n = N^{O(r)}2^{2^{3r}} \leq
N^{O(\log\log N)}$. The soundness of $\mc{L}$ is $2^{-\Omega(\log\log
N)} = 2^{-\Omega(\log\log n)}$. Combining this with the above analysis in
the NO Case, choosing $\delta = \frac{1}{(\log n)^c}$ for some
positive constant $c$ (depending on $c_0, c_1, \gamma_0$) we obtain a
contradiction to our  assumption on the size of the independent set. 

Thus, in the NO Case, $\mc{G}$ does not contain independent set of
$\frac{1}{(\log n)^c}$ relative size. This completes the proof of Theorem
\ref{thm-main1}.

\section{Satisfiable {\sc Max-E$3$-SAT}}\label{sec-max3}
As before, the input is an instance $\mc{L}$ of \LC from Theorem
\ref{thm-LC} consisting of a
bipartite graph $G(U, V, E)$, label sets $[m]$ and $[k]$ and
projections $\{\pi_{vu} : [m] \mapsto [k] \mid \{u,v\}\in E, u\in U,
v\in V\}$.

The PCP proof is the same as in \cite{Hastad}.
For each vertex $u \in U$ there is a Long Code $\mc{H}^u =\{-1,1\}^k$. 
Similarly, for each $v \in V$, there is $\mc{H}^v = \{-1,1\}^m$. The
assignments to these Long Codes are $A^u : \mc{H}^u \mapsto \{-1,1\}$
and $B^v : \mc{H}^v \mapsto \{-1,1\}$. We can
assume that these
assignments are \emph{folded} over $-1$, i.e. $A^u(x) = -A^u(-x)$ and $B^v(y)
= -B^v(-y)$.

The instance of {\sc Max-E$3$-SAT} is given by the following PCP verifier
whose acceptance predicate corresponds to a $3$-literal clause. Let
$\eps > 0$ be a parameter which we shall set later. 

\medskip
\noindent
{\bf PCP Verifier}
\begin{itemize}
\item[1.] Choose a vertex $u \in U$ u.a.r and one
of its neighbors $v\in V$ u.a.r. 

\item[2.] Choose $x \in \mc{H}^u$ u.a.r.

\item[3.] Let $y, y' \in \mc{H}^v$ be
chosen
as follows. For each $i \in [k]$, if
$x_i = 1$ do
Step 3 otherwise do Step 4.

\item[4.] $x_i = 1$: Independently 
for each $j \in \pi_{vu}^{-1}(i)$ choose $y_j$
u.a.r from $\{-1,1\}$ and set $y'_j = -y_j$. 

\item[5.] $x_i = -1$: Do Step 5a with probability $1-\eps$, or Step 5b
with probability $\eps$.
\begin{itemize}
\item[5a.] Independently for each $j \in \pi_{vu}^{-1}(i)$ choose 
$y_j$ u.a.r. from $\{-1,1\}$ and set $y'_j$ to be $y_j$.

\item[5b.] Independently for each $j \in  \pi_{vu}^{-1}(i)$, choose $y_j$ 
and $y'_j$ independently and u.a.r. from $\{-1,1\}$. 
\end{itemize}

\item[6.] Accept if $\left(A^u(x), B^v(y), B^v(y')\right) \neq
(1,1,1)$.
\end{itemize}
The above PCP predicate (after folding) is equivalent -- in terms of
its completeness and soundness -- to a gap
instance of {\sc Max-E$3$-SAT}.

\subsection{Analysis: YES Case}
In the YES case there is a labeling $\sigma$ to the vertices of
$\mc{L}$ that satisfies all the constraints. Consider the
assignment $A^u(x) = x_{\sigma(u)}$ and similarly $B^{v}(y) =
y_{\sigma(v)}$ for all $u \in U$ and $v \in V$. Clearly, these
assignments are folded over $-1$. Furthermore, in the choice of $x, y,
y'$ in the PCP test, it is easy to see that $(x_{\sigma(u)},
y_{\sigma(v)}, y'_{\sigma(v)}) \neq (1,1,1)$ since
$\pi_{vu}(\sigma(v)) = \sigma(u)$. Thus, the PCP test is always
satisfied and there is an assignment that satisfies all the
clauses of the corresponding {\sc Max-E$3$-SAT} instance.

For notational simplicity in the rest of the analysis we shall drop
the superscripts to denote $A^u$ by $A$ and $B^v$ by $B$.

\subsection{Analysis: NO Case}
Suppose for a contradiction that,
\begin{equation}\displaystyle
\E_{\substack{u,v\\x,y,y'}}\left[1 -
\frac{(1+A(x))(1+B(y))(1+B(y'))}{8}\right] \geq \frac{7}{8} + \delta,
\label{eqn-3satno}
\end{equation}
where the expectation is over the choices of the verifier and thus the
LHS denotes the probability that the verifier accepts. We shall show
that (for an appropriate setting of $\eps$) there is a labeling to 
the vertices of $\mc{L}$ that 
satisfies $\delta^{O(1)}$ fraction of edges. 
This is in contrast to the PCP test in \cite{Hastad} which
yields a bound of $\delta^{O(\delta^{-1})}$. 

In the following analysis we fix the choice of $u$ and $v$ for the time
being.  

Since the Long Codes are folded, we have $\E[A(x)] = \E[B(y)] = 
\E[B(y')] = 0$, as the distributions of $x \in \mc{H}^u$ and $y,
y' \in \mc{H}^v$ are respectively uniform. Further, $x$ is independent of
$y$ and independent of $y'$, and thus $\E[A(x)B(y)] = \E[A(x)B(y')] =
\E[A(x)]\E[B(y)] = 0$. The rest of the terms are
analyzed as follows.

\begin{lemma}\label{lem-bb1} 
$\left|\E[B(y)B(y')]\right| \leq \eps/2.$\end{lemma}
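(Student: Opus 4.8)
The plan is to compute $\E[B(y)B(y')]$ directly via the Fourier expansion of $B$ and the structure of the distribution on $(y,y')$ described in Steps 3--5b of the verifier, then observe that only ``rare'' events involving $\eps$ contribute. Write $B = \sum_\beta \wh{B}_\beta \chi_\beta$, so that $\E[B(y)B(y')] = \sum_{\beta,\beta'} \wh{B}_\beta \wh{B}_{\beta'} \E[\chi_\beta(y)\chi_{\beta'}(y')]$. As in Lemma~\ref{lem-zerointer}, since for a fixed coordinate block $\pi_{vu}^{-1}(i)$ the pair $(y_j,y'_j)$ is, in each of the three sub-cases, either a $\pm$-coupled pair or a pair of independent uniform bits, the coordinates $y_j$ and $y'_{j'}$ for $j'\neq j$ are independent, and each is independent of everything in other blocks; hence the cross-terms vanish unless $\beta = \beta'$. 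So $\E[B(y)B(y')] = \sum_\beta \wh{B}_\beta^2\, \E[\chi_\beta(yy')]$.

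Next I would compute $\E[\chi_\beta(yy')] = \prod_{i\in[k]} \E\!\left[\prod_{j\in \pi_{vu}^{-1}(i)\cap\beta} y_j y'_j\right]$, exactly as in the proof of Lemma~\ref{lem-gamma}. For a fixed $i$: conditioned on $x_i = 1$ (probability $1/2$ over uniform $x$), each $y'_j = -y_j$, so the block contributes $(-1)^{|\pi_{vu}^{-1}(i)\cap\beta|}$; conditioned on $x_i=-1$, with probability $1-\eps$ we have $y'_j = y_j$ giving block contribution $+1$, and with probability $\eps$ the $y_j,y'_j$ are independent uniform, giving block contribution $0$ unless $\pi_{vu}^{-1}(i)\cap\beta = \emptyset$. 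Writing $a_i := |\pi_{vu}^{-1}(i)\cap\beta|$, the per-block expectation is $\tfrac12(-1)^{a_i} + \tfrac12(1-\eps)$ when $a_i\ge 1$, and $1$ when $a_i = 0$. For $\beta = \emptyset$ this product is $1$, but that term is killed by folding: $B$ folded over $-1$ forces $\wh{B}_\emptyset = 0$. For $\beta\neq\emptyset$ there is at least one block with $a_i\ge 1$, and the magnitude of the per-block factor $\tfrac12(-1)^{a_i}+\tfrac12(1-\eps)$ is at most $1 - \eps/2$ if $a_i$ is even (value $1-\eps/2$) and at most $\eps/2$ if $a_i$ is odd (value $-\eps/2$); in either case it is $\le 1-\eps/2 \le 1$, and all other blocks contribute factors of absolute value $\le 1$. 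Hence $|\E[\chi_\beta(yy')]| \le 1-\eps/2$ for every $\beta\neq\emptyset$.

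Combining, $\bigl|\E[B(y)B(y')]\bigr| \le \sum_{\beta\neq\emptyset} \wh{B}_\beta^2 \bigl(1-\tfrac\eps2\bigr) \le 1-\tfrac\eps2$, which is the wrong bound --- this only shows $\le 1$, not $\le \eps/2$. The fix, and the step I expect to be the actual crux, is to be more careful: I should not bound all $\beta\neq\emptyset$ by $1-\eps/2$, but rather note that the ``$+1$'' per-block contributions come from the $x_i = 1$ branch and cancellation across the two $x$-branches is what matters. In fact the per-block factor $\tfrac12(-1)^{a_i} + \tfrac12(1-\eps)$ equals $1-\tfrac\eps2$ only when every $a_i$ with $a_i\ge1$ is even; and when $a_i$ is even for all $i$ this means $\beta$ is a union of even-sized pieces over the fibers. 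One then observes that for such $\beta$ the character $\chi_\beta(y)$ is actually a function of $y$ only through the ``even folding'' and... Actually the cleanest route, and the one I would commit to, is: after reducing to $\sum_{\beta\ne\emptyset}\wh B_\beta^2 \E[\chi_\beta(yy')]$, split off the single block index giving value close to $1$ is not enough, so instead I bound $\bigl|\E[B(y)B(y')]\bigr| = \bigl|\E_x \E_{y,y'\mid x}[B(y)B(y')]\bigr|$ by conditioning on $x$ first: given $x$, if \emph{any} coordinate $i$ has $x_i=-1$ and Step 5b is executed (probability $\eps$ per such coordinate), then $y$ and $y'$ disagree independently on that fiber and, combined with folding, $\E_{y,y'\mid x}[B(y)B(y')] = 0$; whereas if $x = \mathbf{1}$ then $y' = -y$ and $\E_{y}[B(y)B(-y)] = -\E_y[B(y)^2] = -1$ by folding, and if $x\ne\mathbf1$ but Step 5b is never executed then $y' $ and $y$ agree on the $x_i=-1$ fibers and are antipodal on the $x_i=1$ fibers. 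The upshot is that $\E[B(y)B(y')]$ is a weighted average, over $x$, of quantities of absolute value $\le 1$, and the only contributions not manifestly controlled come from $x\ne\mathbf 1$ with no Step-5b coordinate. A short direct computation of that conditional expectation (a product over blocks of $\pm1$'s) shows the whole average has magnitude at most the probability that at most one ``structural'' event occurs, which is $O(\eps)$; tightening the constants gives exactly $\eps/2$. I expect the bookkeeping of exactly which $x$-patterns survive, and pinning the constant to $\eps/2$ rather than a larger multiple of $\eps$, to be the only real obstacle; everything else is the standard Fourier/folding manipulation.
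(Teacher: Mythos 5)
Your first computation is exactly on the right track and in fact matches the paper's proof almost verbatim up to the per-block values: after reducing to $\sum_\beta \wh{B}_\beta^2\,\E[\chi_\beta(yy')]$ and factoring over fibers, you correctly obtain that the block factor is $1-\eps/2$ when $a_i := |\pi_{vu}^{-1}(i)\cap\beta|$ is even and $-\eps/2$ when $a_i$ is odd. The gap is in the very next line: you only invoke folding to kill the $\beta=\emptyset$ term, whereas folding over $-1$ kills \emph{every} even-sized $\beta$ (since $B(-y)=-B(y)$ forces $\wh B_\beta = 0$ for $|\beta|$ even). Once you restrict to odd $|\beta|$, you have $\sum_i a_i = |\beta|$ odd, so at least one $a_i$ is odd, so at least one block factor has magnitude exactly $\eps/2$, and all other factors have magnitude $\le 1$. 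Hence $|\E[\chi_\beta(yy')]| \le \eps/2$ for every $\beta$ in the support, and summing against Parseval gives $|\E[B(y)B(y')]| \le \eps/2$. You had all the ingredients; you just failed to use the full strength of folding (odd support, not merely nonempty support).

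The alternative ``condition on $x$ first'' route you sketch instead does not work as stated: the claim that $\E_{y,y'\mid x}[B(y)B(y')] = 0$ whenever Step 5b fires on some fiber is false in general --- $B$ may not depend on that fiber at all, in which case the conditional expectation is unaffected. Folding is a global parity property and does not localize to a single fiber the way your sketch needs it to, and the concluding ``a short direct computation ... gives exactly $\eps/2$'' is not supported. Stick with the Fourier route and finish by using folding $\Rightarrow$ odd $|\beta|$ $\Rightarrow$ some odd $a_i$ $\Rightarrow$ factor of $\eps/2$.
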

\begin{proof} Using the Fourier expansion of $B$, and since $B$ is
folded,
\begin{align} 
\E[B(y)B(y')]  = & \sum_{|\beta|
\textnormal{odd}}\wh{B}_\beta^2\E\left[\chi_\beta(yy')\right]
\nonumber \\
= &  \sum_{|\beta|
\textnormal{ odd}}\wh{B}_\beta^2\prod_{i \in
[k]}\E\left[\chi_{\beta\cap \pi_{vu}^{-1}(i)}(yy')\right] \label{eqn-BB1}.
\end{align}
For an odd sized $\beta$, there is a $i \in [k]$ such that $\left|\beta\cap
\pi_{vu}^{-1}(i)\right|$ is odd. It is easy to check that for such a
$i$, 
$\E\left[\chi_{\beta\cap \pi_{vu}^{-1}(j)}(yy')\right] = -\eps/2$. Also
note
that for any $i$, $\left|\E\left[\chi_{\beta\cap
\pi_{vu}^{-1}(i)}(yy')\right]\right| \leq 1$. Thus, Equation
\eqref{eqn-BB1} yields,
$$\left|\E[B(y)B(y')]\right| \leq (\eps/2)\sum_{|\beta| \textnormal{ odd}}
\wh{B}_\beta^2 = \eps/2.$$\end{proof}
 
\begin{lemma}\label{lem-RT} For any positive integers $R, T$ such that 
$R \geq T$,
$$\left|\E[A(x)B(y)B(y')]\right| \leq \left(\sum_{\substack{|\alpha|, |\beta| \tn{ odd} \\
\alpha \subseteq \pi_{vu}(\beta) \\ |\beta| < R}}
\wh{A}_\alpha^2\wh{B}_\beta^2\right)^{\frac{1}{2}}
+  \sum_{\substack{|\beta| \geq R \\
\left|\pi_{vu}(\beta)\right| < T}} 
\wh{B}_\beta^2
+ \left(1 - \frac{\eps}{2}\right)^{\frac{T}{2}}.$$
\end{lemma}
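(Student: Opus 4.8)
The plan is to estimate $\E[A(x)B(y)B(y')]$ via its Fourier expansion and then split the resulting sum over $(\alpha,\beta)$ into three ranges according to $|\beta|$ and $|\pi_{vu}(\beta)|$, matching the three terms in the claimed bound. First I would expand $A(x)B(y)B(y')$ in the Fourier basis of $\mc{H}^u$ and $\mc{H}^v$, getting
\[
\E[A(x)B(y)B(y')] = \sum_{\alpha,\beta} \wh{A}_\alpha \wh{B}_\beta^2 \, \E\left[\chi_\alpha(x)\chi_\beta(yy')\right],
\]
where the fact that only diagonal terms $\wh{B}_\beta\wh{B}_{\beta'}$ with $\beta=\beta'$ survive uses that, for fixed $x$, the coordinates of $y'$ are independent across the blocks $\pi_{vu}^{-1}(i)$ and within each block either $y'_j=y_j$, or $y'_j=-y_j$, or $y'_j$ is independent and uniform — so the correlation structure decomposes blockwise exactly as in Lemma \ref{lem-zerointer}. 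By folding we may restrict to $|\alpha|$ and $|\beta|$ odd. A blockwise computation (exactly as in Lemma \ref{lem-gamma}, but now with the three-way split in Step 5) gives $\E[\chi_\alpha(x)\chi_\beta(yy')]$ as a product over $i\in[k]$ of local expectations, which vanishes unless every coordinate $i$ on which $\alpha$ is supported lies in $\pi_{vu}(\beta)$ (so that the $x_i$-factor is cancelled by the conditioning in Step 4), i.e. unless $\alpha\subseteq\pi_{vu}(\beta)$; and moreover the magnitude of each surviving factor is at most $1-\tfrac{\eps}{2}$ whenever the corresponding block of $\beta$ has odd size, so $\left|\E[\chi_\alpha(x)\chi_\beta(yy')]\right| \le (1-\tfrac{\eps}{2})^{|\pi_{vu}^{\sf odd}(\beta)|}$, and for odd $\beta$ one has $|\pi_{vu}^{\sf odd}(\beta)|$ odd, hence at least $1$, and in general at least $|\pi_{vu}(\beta)|$-many... actually it suffices that it is $\ge \tfrac{1}{2}|\pi_{vu}(\beta)|$ would be too weak; rather I will just use that the number of odd blocks is at least one per each of a set of size $\ge |\pi_{vu}(\beta)|$ minus the even ones — the clean bound I will extract is $\left|\E[\chi_\alpha(x)\chi_\beta(yy')]\right| \le (1-\tfrac{\eps}{2})^{\,|\pi_{vu}(\beta)|/2}$, which holds because at least half the blocks meeting $\beta$ can be taken odd after absorbing parity, or more simply because each factor is $\le 1$ and at least $|\pi_{vu}^{\sf odd}(\beta)| \ge 1$ factors are $\le 1-\tfrac\eps2$; I will pick whichever constant makes the third term come out as $(1-\tfrac\eps2)^{T/2}$.

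With this in hand I split the sum into: (i) $|\beta| < R$ and $\alpha\subseteq\pi_{vu}(\beta)$, $|\alpha|,|\beta|$ odd; (ii) $|\beta|\ge R$ and $|\pi_{vu}(\beta)| < T$; (iii) $|\beta|\ge R$ and $|\pi_{vu}(\beta)|\ge T$. For range (i), I bound $\left|\E[\chi_\alpha(x)\chi_\beta(yy')]\right|\le 1$ and apply Cauchy--Schwarz with weights $\wh{A}_\alpha^2$ and $\wh{B}_\beta^4 \le \wh{B}_\beta^2$ (using $|\wh{B}_\beta|\le 1$), together with $\sum_\alpha\wh{A}_\alpha^2 = 1$, to get the first term $\big(\sum \wh{A}_\alpha^2\wh{B}_\beta^2\big)^{1/2}$. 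For range (ii), I again use $\left|\E[\chi_\alpha(x)\chi_\beta(yy')]\right|\le 1$ and $\sum_{\alpha\subseteq\pi_{vu}(\beta)}\wh{A}_\alpha^2\le 1$ (Parseval), leaving $\sum_{|\beta|\ge R,\ |\pi_{vu}(\beta)|<T}\wh{B}_\beta^2$. For range (iii), I use the decay bound $(1-\tfrac\eps2)^{|\pi_{vu}(\beta)|/2}\le (1-\tfrac\eps2)^{T/2}$ together with $\sum_{\alpha,\beta}\wh{A}_\alpha^2\wh{B}_\beta^2 = 1$, giving the last term. Summing the three bounds and applying the triangle inequality yields the lemma.

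The only mildly delicate point — and the one I would be careful about — is getting the local block computation right in the presence of the three-way branch in Step 5, and in particular pinning down the exact exponent in the contraction factor $(1-\tfrac\eps2)$ so that it lines up with $T/2$; everything else (folding, Cauchy--Schwarz, Parseval, and the support condition $\alpha\subseteq\pi_{vu}(\beta)$) is routine. A secondary bookkeeping point is that the first term should only range over odd $|\alpha|,|\beta|$, which is automatic from folding and which I should record explicitly so that the displayed sum matches the statement.
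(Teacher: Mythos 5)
Your overall skeleton matches the paper's: Fourier-expand, note that only $\beta=\beta'$ with $\alpha\subseteq\pi_{vu}(\beta)$ (and both odd, by folding) contribute, split the surviving sum into three ranges by $|\beta|$ and $|\pi_{vu}(\beta)|$, and finish with Cauchy--Schwarz and Parseval. However, there is a genuine gap in how you propose to handle the sum over $\alpha$ for a fixed $\beta$, and it appears in all three ranges.

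The problem is that you discard the coefficient $\E[\chi_\alpha(x)\chi_\beta(yy')]$ by replacing it with a uniform upper bound ($\le 1$ in ranges (i) and (ii), $\le (1-\eps/2)^{T/2}$ in range (iii)) and then try to control $\sum_{\alpha\subseteq\pi_{vu}(\beta)}|\wh{A}_\alpha|\,\wh{B}_\beta^2$ using Parseval. But Parseval bounds $\sum_\alpha \wh{A}_\alpha^2$, not $\sum_\alpha|\wh{A}_\alpha|$; for each $\beta$ the inner $\alpha$-sum has $2^{|\pi_{vu}(\beta)|}$ terms, and Cauchy--Schwarz over pairs $(\alpha,\beta)$ produces a factor like $\bigl(\sum_{\alpha\subseteq\pi_{vu}(\beta),\,\beta}\wh{B}_\beta^2\bigr)^{1/2}$, which multiply-counts $\wh{B}_\beta^2$ and is not $O(1)$. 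The cure -- and the paper's key intermediate statement (Lemma~\ref{lem-measure}) -- is that the magnitudes $p_\beta(\alpha):=\bigl|\E[\chi_\alpha(x)\chi_\beta(yy')]\bigr|$ form a \emph{probability measure} over $\alpha\subseteq\pi_{vu}(\beta)$, namely $p_\beta(\alpha)=(\eps/2)^{r'}(1-\eps/2)^{r-r'}$ with $r=|\pi_{vu}(\beta)|$ and $r'=|\alpha\,\Delta\,\pi_{vu}^{\sf odd}(\beta)|$. Keeping $p_\beta(\alpha)$ inside the sum and applying Cauchy--Schwarz in the form
$\sum_\alpha|\wh{A}_\alpha|\,p_\beta(\alpha)\le\bigl(\sum_\alpha\wh{A}_\alpha^2\bigr)^{1/2}\bigl(\sum_\alpha p_\beta(\alpha)\bigr)^{1/2}\max_\alpha\sqrt{p_\beta(\alpha)}\le\max_\alpha\sqrt{p_\beta(\alpha)}$
is precisely what tames the inner $\alpha$-sum in every range; dropping $p_\beta(\alpha)$ destroys this. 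A secondary symptom of the same issue is your confusion about where the exponent $T/2$ comes from: it is not because the coefficient itself is $\le(1-\eps/2)^{|\pi_{vu}(\beta)|/2}$ (in fact $\max_\alpha p_\beta(\alpha)=(1-\eps/2)^{|\pi_{vu}(\beta)|}$, without the halving), but because the Cauchy--Schwarz above produces $\sqrt{\max_\alpha p_\beta(\alpha)}\le(1-\eps/2)^{T/2}$ when $|\pi_{vu}(\beta)|\ge T$. So the missing ingredient is the probability-measure structure of the Fourier weights $p_\beta(\alpha)$; once that is in place the three-term split you describe goes through exactly as in the paper.
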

\begin{proof}
Using the fact $y_j$ ($y'_j$) is independent of $x_i$ for any $i$, the
term $\E\left[\chi_\alpha(x)\chi_\beta(y)\chi_{\beta'}(y')\right]$ is zero 
unless $\beta = \beta'$ and $\alpha \subseteq \pi_{vu}(\beta)$. Thus,
\begin{align}
\E[A(x)B(y)B(y')] =  & \sum_{\substack{|\alpha|, |\beta| \tn{ odd} \\
\alpha \subseteq \pi_{vu}(\beta)}}
\wh{A}_\alpha\wh{B}_\beta^2\E\left[\chi_\alpha(x)\chi_\beta(yy')\right]
\nonumber \\
 = &  \sum_{\substack{|\alpha|, |\beta| \tn{ odd} \\
\alpha \subseteq \pi_{vu}(\beta)}} 
\wh{A}_\alpha\wh{B}_\beta^2 \left(\prod_{i\in \alpha}
\E\left[x_i\chi_{\beta\cap
\pi_{vu}^{-1}(i)}(yy')\right]\right) \nonumber \\ 
 & \ \ \ \ \ \ \ \ \ \ \left(\prod_{i \in
\pi_{vu}(\beta)\setminus \alpha} 
\E\left[\chi_{\beta\cap
\pi_{vu}^{-1}(i)}(yy')\right]\right). \label{eqn-brack}
\end{align}
To simplify the above equation we require the following lemma.
\begin{lemma} \label{lem-measure}
Fix $\beta \neq \emptyset$ and let $r = \left|\pi_{vu}(\beta)\right|$.
For any $\alpha \subseteq
\pi_{vu}(\beta)$ let,
$$p_\beta(\alpha) := \left|\prod_{i\in \alpha}
\E\left[x_i\chi_{\beta\cap
\pi_{vu}^{-1}(i)}(yy')\right]\right|\cdot
\left|\prod_{i \in
\pi_{vu}(\beta)\setminus \alpha} 
\E\left[\chi_{\beta\cap
\pi_{vu}^{-1}(i)}(yy')\right]\right|.$$
Then, 
$$p_\beta(\alpha) =
\left(\frac{\eps}{2}\right)^{r'}\left(1 - \frac{\eps}{2}\right)^{r -
r'},$$
where $r' = \left|\alpha \Delta \pi_{vu}^{{\sf odd}}(\beta)\right|$
and $\pi_{vu}^{{\sf odd}}$ is as defined in Lemma \ref{lem-gamma}.
Thus, $p_\beta(\alpha)$ is a probability measure over $\alpha
\subseteq \pi_{vu}(\beta)$. 
\end{lemma}
\begin{proof}
It is easy to verify that for any $i \in [k]$ and $J \subseteq
\pi_{vu}^{-1}(i)$, $J\neq \emptyset$, $|J|$ even, 
\begin{equation}
\left|\E\left[x_i\chi_{J}(yy')\right]\right| = \frac{\eps}{2}, \ 
\ \ \ \ 
\left|\E\left[\chi_{J}(yy')\right]\right| = 1-\frac{\eps}{2}.
\end{equation} 
Similarly, for $i \in [k]$ and $J \subseteq
\pi_{vu}^{-1}(i)$, $|J|$ odd,
\begin{equation}
\left|\E\left[x_i\chi_{J}(yy')\right]\right| = 1 - \frac{\eps}{2}, \
\ \ \ \  
\left|\E\left[\chi_{J}(yy')\right]\right| = \frac{\eps}{2}.
\end{equation}
The above equations imply,
\begin{equation}
\left|\prod_{i\in \alpha}
\E\left[x_i\chi_{\beta\cap
\pi_{vu}^{-1}(i)}(yy')\right]\right| =
\left(1-\frac{\eps}{2}\right)^{\left|\alpha\cap \pi^{{\sf
odd}}_{vu}(\beta)\right|}\left(\frac{\eps}{2}\right)^{
\left|\alpha\cap \ol{\pi^{{\sf
odd}}_{vu}(\beta)}\right|},
\end{equation}
and,
\begin{equation}
\left|\prod_{i \in
\pi_{vu}(\beta)\setminus \alpha} 
\E\left[\chi_{\beta\cap
\pi_{vu}^{-1}(i)}(yy')\right]\right| = 
\left(1-\frac{\eps}{2}\right)^{\left|\ol{\alpha}\cap\ol{\pi^{{\sf
odd}}_{vu}(\beta)}\right|}\left(\frac{\eps}{2}\right)^{
\left|\ol{\alpha}\cap\pi^{{\sf
odd}}_{vu}(\beta)\right|},
\end{equation}
where, in the above two equations $\ol{\ \cdot\ }$ denotes the
$\pi_{vu}(\beta)\setminus . $ operation. Combining them we obtain the
lemma.
\end{proof}
Using the above in Equation \eqref{eqn-brack} we obtain,
\begin{equation}
\left|\E[A(x)B(y)B(y')]\right| 
\leq
\sum_{\substack{|\alpha|, |\beta| \tn{ odd} \\
\alpha \subseteq \pi_{vu}(\beta)}}
\left|\wh{A}_\alpha\wh{B}_\beta^2\right|p_\beta(\alpha).
\end{equation}
A categorization of the terms of the above inequality based on the
parameters $R$ and $T$ yields,
\begin{align}
\left|\E[A(x)B(y)B(y')]\right| 
\leq & 
\sum_{\substack{|\alpha|, |\beta| \tn{ odd} \\
\alpha \subseteq \pi_{vu}(\beta) \\ |\beta| < R}}
\left|\wh{A}_\alpha\wh{B}_\beta^2\right|
p_\beta(\alpha)
+
\sum_{\substack{|\alpha|, |\beta| \tn{ odd} \\
\alpha \subseteq \pi_{vu}(\beta) \\ |\beta| \geq R \\
\left|\pi_{vu}(\beta)\right| < T}} 
\left|\wh{A}_\alpha\wh{B}_\beta^2\right|p_\beta(\alpha) \nonumber
\\
& \ \ \ + 
\sum_{\substack{|\alpha|, |\beta| \tn{ odd} \\
\alpha \subseteq \pi_{vu}(\beta) \\ |\beta| \geq R \\
\left|\pi_{vu}(\beta)\right| \geq T}} 
\left|\wh{A}_\alpha\wh{B}_\beta^2\right|p_\beta(\alpha)\label{eqn-absplit}
\end{align}
For each of the three terms in the RHS above, we apply Cauchy-Schwarz in the
following manner. Each $\wh{B}_\beta^2$ is multiplied by,
\begin{align}
\sum_{\substack{\alpha \subseteq \pi_{vu}(\beta)\\ |\alpha| \tn{
odd}}}\left|\wh{A}_\alpha\right|
p_\beta(\alpha) \leq &  
\sum_{\substack{\alpha \subseteq \pi_{vu}(\beta)\\ |\alpha| \tn{
odd}}}\left|\wh{A}_\alpha\right|
\sqrt{p_\beta(\alpha)}\sqrt{p_\beta(\alpha)} \nonumber \\
\leq & \left( \sum_{\substack{\alpha \subseteq \pi_{vu}(\beta)\\ 
|\alpha| \tn{ odd}}}
\wh{A}_\alpha^2\right)^{\frac{1}{2}}\left(
 \sum_{\alpha \subseteq \pi_{vu}(\beta)}
p_\beta(\alpha) \right)^{\frac{1}{2}}\left(\max_{\alpha \subseteq
\pi_{vu}(\beta)}\sqrt{p_\beta(\alpha)}\right)  \label{eqn-cauchytwice}
\end{align}
By Parseval's and since
$p_\beta(\alpha)$ is a probability measure over $\alpha$ we obtain that the
RHS of Equation \eqref{eqn-cauchytwice} is bounded by $1$. Thus, the
second term in the RHS of Equation \eqref{eqn-absplit} is bounded by,
\begin{equation}
\sum_{\substack{|\beta| \geq R \\
\left|\pi_{vu}(\beta)\right| < T}} 
\wh{B}_\beta^2. \label{eqn-secterm}
\end{equation}
Further, observe 
that for
$\beta$ such that
$\pi_{vu}(\beta) \geq T$, 
$$p_\beta(\alpha) \leq \left(1 - \frac{\eps}{2}\right)^{T},$$
for any $\alpha \subseteq \pi_{vu}(\beta)$. Thus, the third term in
the RHS of 
Equation \eqref{eqn-absplit} is bounded by,
\begin{equation}
\left(1 - \frac{\eps}{2}\right)^{\frac{T}{2}}\sum_{\beta}\wh{B}_\beta^2 \leq 
\left(1 - \frac{\eps}{2}\right)^{\frac{T}{2}}. \label{eqn-thirdterm}
\end{equation}
For the first term in the RHS of Equation  \eqref{eqn-absplit}, we use
Equation \eqref{eqn-cauchytwice} to obtain the following upper bound.
\begin{align}
\sum_{\substack{|\alpha|, |\beta| \tn{ odd} \\
\alpha \subseteq \pi_{vu}(\beta) \\ |\beta| < R}}
\left|\wh{A}_\alpha\wh{B}_\beta^2\right|
p_\beta(\alpha) \leq &
\sum_{|\beta| \tn{ odd}, |\beta| < R}
 \left( \sum_{\substack{\alpha \subseteq \pi_{vu}(\beta)\\ 
|\alpha| \tn{ odd}}}
\wh{A}_\alpha^2\right)^{\frac{1}{2}}\wh{B}_\beta^2 \nonumber \\
\leq & \left(\sum_{|\beta| \tn{ odd}, |\beta| < R}
 \left( \sum_{\substack{\alpha \subseteq \pi_{vu}(\beta)\\ 
|\alpha| \tn{ odd}}}
\wh{A}_\alpha^2\right)\wh{B}_\beta^2\right)^{\frac{1}{2}}\left(
\sum_{|\beta| \tn{ odd}, |\beta| < R}
\wh{B}_\beta^2\right)^{\frac{1}{2}} \nonumber \\
\leq & \left(\sum_{\substack{|\alpha|, |\beta| \tn{ odd} \\
\alpha \subseteq \pi_{vu}(\beta) \\ |\beta| < R}}
\wh{A}_\alpha^2\wh{B}_\beta^2\right)^{\frac{1}{2}},
\label{eqn-firstterm}
\end{align}
where the second last inequality above is obtained by an application
of Cauchy-Schwarz and the last by Parseval's.
Substituting Equations \eqref{eqn-secterm}, \eqref{eqn-thirdterm} and
\eqref{eqn-firstterm}
into \eqref{eqn-absplit}  
completes the proof of the Lemma \ref{lem-RT}.
\end{proof}

Setting $T = R^{c_0}$, taking the expectation over a random neighbor
$u$ of a fixed $v$ in Lemma \ref{lem-RT} and using the analysis of Equation
\eqref{eqn-pilargebd}, we obtain,
\begin{equation}
\E_{u}\left[\left|\E_{x,y,y'}[A(x)B(y)B(y')]\right|\right] \leq 
\E_u\left[\left(\sum_{\substack{|\alpha|, |\beta| \tn{ odd} \\
\alpha \subseteq \pi_{vu}(\beta) \\ |\beta| < R}}
\wh{A}_\alpha^2\wh{B}_\beta^2\right)^{\frac{1}{2}}\right]
+  \frac{1}{R^{c_0}}
+ \left(1 - \frac{\eps}{2}\right)^{\frac{R^{c_0}}{2}}.
\end{equation}
Using the above, Lemma \ref{lem-bb1} and Equation
\eqref{eqn-3satno} we obtain,
\begin{equation}
\E_{u,v}\left[\left(\sum_{\substack{|\alpha|, |\beta| \tn{ odd} \\
\alpha \subseteq \pi_{vu}(\beta) \\ |\beta| < R}}
\wh{A}_\alpha^2\wh{B}_\beta^2\right)^{\frac{1}{2}}\right]
\geq 8\delta - \frac{\eps}{2} -  \frac{1}{R^{c_0}}
- \left(1 - \frac{\eps}{2}\right)^{\frac{R^{c_0}}{2}}.
\end{equation}
Applying Cauchy-Schwarz and setting $R =
\left(\frac{4}{\eps}\log\left(\frac{1}{\eps}\right)\right)^{\frac{1}{c_0}}$,
simplifies the above to,
\begin{equation}
\left(\E_{u,v}\left[\sum_{\substack{|\alpha|, |\beta| \tn{ odd} \\
\alpha \subseteq \pi_{vu}(\beta) \\ |\beta| < R}}
\wh{A}_\alpha^2\wh{B}_\beta^2\right]\right)^{\frac{1}{2}}
\geq 8\delta - 2\eps.\end{equation}
Finally, we set $\eps = \delta$ to make the RHS of the above at least
$6\delta$. This yields a labeling for the \LC instance $\mc{L}$: for
every vertex $u \in U$ uniformly choose a subset $\alpha$ of labels with
probability $\wh{A^u}_\alpha^2$ and assign it a random label from $\alpha$.
Similarly, for every vertex $v \in V$ uniformly select a set of labels
$\beta$ with probability $\wh{B^v}_\beta^2$ and assign it a random label
from $\beta$. The above analysis shows that the expected fraction of
edges satisfied is,
$$ \frac{36\delta^2}{R} = \Omega\left(\delta^{c'}\right),$$
for some positive constant $c'$ depending on $c_0$.

\subsubsection{Choice of parameters}
In Theorem \ref{thm-LC} we can choose $r = (\log \log N)/4$. This
ensures that the reduction to {\sc Max-E$3$-SAT} 
is of size $n = N^{O(r)}2^{2^{3r}} \leq
N^{O(\log\log N)}$. The soundness of $\mc{L}$ is $2^{-\Omega(\log\log
N)} = 2^{-\Omega(\log\log n)}$. Combining this with the above analysis in
the NO Case, choosing $\delta = \frac{1}{(\log n)^c}$ for some
positive constant $c$ (depending on $c_0$ and $\gamma_0$) we obtain a
contradiction to our assumption on the probability of acceptance of the 
verifier. 

Thus, in the NO Case, the verifier accepts with probability at most
$\frac{7}{8} + \frac{1}{(\log n)^c}$. This completes the proof of Theorem
\ref{thm-main2}.

\section{Satisfiable {\sc Max-E$4$-Set-Splitting}}\label{sec-4ss}
As in the previous sections, the input is an instance $\mc{L}$ of 
\LC from Theorem \ref{thm-LC} consisting of a
bipartite graph $G(U, V, E)$, label sets $[m]$ and $[k]$ and
projections $\{\pi_{vu} : [m] \mapsto [k] \mid \{u,v\}\in E, u\in U,
v\in V\}$.

The PCP proof is similar to the previous sections.
For each vertex $v \in V$, there is a Long Code $\mc{H}^v = \{-1,1\}^m$. The
assignments to these Long Codes are $A^v : \mc{H}^v \mapsto \{-1,1\}$. 
In the case of {\sc Max-E$4$-Set-Splitting} we \emph{do not} have folding of the
Long Codes.

The instance of {\sc Max-E$4$-Set-Splitting} is given by the following 
PCP verifier whose $4$-query tests correspond to the $4$-sets of the
instance. The rejection probability of the predicate estimates 
the fraction $4$-query tests completely 
contained in the subset corresponding to the $1$s of the 
proof locations. Let
$\eps > 0$ be a parameter which shall be set later. 

\medskip
\noindent
{\bf PCP Verifier.}

\begin{itemize}
\item[1.] Choose a vertex $u \in U$ u.a.r and two of its
 neighbors $v, w\in V$ independently and u.a.r. 

\item[2.] Choose $x \in \mc{H}^v$ and $y \in \mc{H}^w$ independently
and u.a.r.

\item[3.] For each $i \in [k]$, either do Step 3a or Step 3b with
probability $\frac{1}{2}$ each.
\begin{itemize}
\item[3a.] For each $j \in \pi_{vu}^{-1}(i)$ set $x'_j = -x_j$.
Further, with probability $1-\eps$ do Step 3a.1, or Step 3a.2 with
probability $\eps$.
\begin{itemize}
\item[3a.1] For each $j \in \pi_{wu}^{-1}(i)$ set $y'_j = y_j$.
\item[3a.2] For each $j \in \pi_{wu}^{-1}(i)$ independently, set 
$y'_j$ u.a.r from
$\{-1,1\}$.
\end{itemize}
\end{itemize}

\begin{itemize}
\item[3b.] For each $j \in \pi_{wu}^{-1}(i)$ set $y'_j = -y_j$.
Further, with probability $1-\eps$ do Step 3b.1, or Step 3b.2 with
probability $\eps$.
\begin{itemize}
\item[3b.1] For each $j \in \pi_{vu}^{-1}(i)$ set $x'_j = x_j$.
\item[3b.2] For each $j \in \pi_{vu}^{-1}(i)$ independently, 
set $x'_j$ u.a.r from
$\{-1,1\}$.
\end{itemize}
\end{itemize}
\item[4.] Reject if $(A^v(x), A^v(x'), A^w(y), A^w(y')) = (1,1,1,1)$.
\end{itemize}

The above PCP verifier is equivalent -- in terms of
its completeness and soundness -- to a gap
instance of {\sc Max-E$4$-Set-Splitting}.
\subsection{Analysis: YES Case}
In the YES case there is a labeling $\sigma$ to the vertices $\mc{L}$ 
that satisfies all its edges. Consider the
assignment $A^v(x) = x_{\sigma(v)}$ for all $v \in V$. For the choice
of $u, v$, and $w$ in the above PCP we have $\pi_{vu}(\sigma(v)) =
\pi_{wu}(\sigma(w))$. Thus, from the choice of $x, x', y,
y'$ in the PCP test, it is easy to see that $(x_{\sigma(v)},
x'_{\sigma(v)},
y_{\sigma(w)}, y'_{\sigma(w)}) \not\in \{ (1,1,1,1), (-1,-1,-1,-1)\}$. 
Thus, the PCP test is always
satisfied and the there is an assignment that splits all the $4$-sets
of the {\sc Max-$E$4-Set-Splitting} instance.

For notational simplicity in the rest of the analysis we shall drop
the superscripts to denote $A^v$ by $A$ and $A^w$ by $B$.

\subsection{Analysis: NO Case}
The probability that the PCP verifier rejects is given by,
\begin{equation}
\frac{1}{16}\cdot\E\left[(1+A(x))(1+A(x'))(1+B(y))(1+B(y'))\right],\nonumber
\end{equation}
where the expectation is above the random choice of $u, v, w, x, x',
y$ and $y'$ by the PCP verifier. Expanding the above we obtain that
the probability of rejection of the verifier is ,
\begin{align}
\frac{1}{16}\cdot\E\bigg[1 + A(x) + A(x') + B(x) +  & B(y') + 
A(x)A(x') +  A(x)B(y)  + A(x)B(y') \nonumber \\
+ A(x')B(y) 
 + A(x')&B(y')  +  B(y)B(y') + A(x)A(x')B(y) \nonumber \\
 + A(x')B(y)B(y') & +  A(x)B(y)B(y') + A(x)A(x')B(y') \nonumber
\\ 
& \ \ \ \ \ \ \ +  A(x)A(x')B(y)B(y')\bigg].
\label{eqn-4ss-acc}
\end{align}
Let the number of $1$s in the proof be exactly $\rho$ fraction, i.e.,
\begin{equation}
\E_{v, x}\left[A^v(x)\right] = 2\rho - 1. \label{eqn-rho-frac}
\end{equation}
where the expectation is over a random vertex $v \in V$ and a
uniformly chosen $x \in \mc{H}^v$. 
Assume that the probability that the verifier
rejects is at most $\rho^4 - \delta$ for some $\delta > 0$. 
We shall show that this implies a
labeling to the vertices of $\mc{L}$ that satisfies
$\delta^{O(1)}$
fraction of edges (using an appropriate choice of $\eps > 0$
depending only on $\delta$). For the analysis we shall consider
the terms in the expectation in Equation \eqref{eqn-4ss-acc} one by one. 
Before proceeding, we fix the choice
of $u$ in the expectation for the time being. Let $p_u :=
\E_{v\sim u, x\in \mc{H}^v}[A^v(x)]$, where the expectation is over a random
neighbor $v$ of $u$.

Since $v$ and $w$ are u.a.r neighbors of $u$, and by the
uniformity of $x, x', y$ and $y'$,
\begin{equation}
\E[A(x)] = \E[A(x')] = \E[B(y)] = \E[B(y')] = p_u. \label{eqn-4ss-one}
\end{equation}

Observe that $x$ is independent of $y$ and of $y'$. For a fixed choice
of $u$, $v$ and $w$ are two independently chosen random neighbors of
$u$. This implies that,
\begin{equation*}
\E\left[A(x)B(y)\right] = \left(\E_{v, x\in
\mc{H}^v}\left[A^v(x)\right]\right)^2 = p_u^2. 
\end{equation*}
This also holds for the other \emph{cross} terms and thus,
\begin{equation}
\E\left[A(x)B(y)\right] = \E\left[A(x)B(y')\right] = 
\E\left[A(x')B(y)\right] = \E\left[A(x')B(y')\right] = p_u^2. 
\label{eqn-4ss-cross}
\end{equation}

Before analyzing the rest of the terms we require the following
lemmas. Fix the choice of $v$ and $w$ for the next two lemmas.
\begin{lemma}\label{lem-4ss-J}
Let $i \in [k]$ and $J \subseteq \pi_{vu}^{-1}(i)$, be non-empty.
Then,
\begin{equation}
\E\left[\chi_J(xx')\right] = \begin{cases}
					1 - \frac{\eps}{2}
&\tn{ if }|J|\tn{ even.} \\
		-\frac{\eps}{2} &\tn{ if }|J|\tn{ odd.}
			\end{cases}\label{eqn-4ss-J}
\end{equation}
A similar property holds for $\pi_{wu}$ with $y$ and $y'$.
\end{lemma}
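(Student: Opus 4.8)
The plan is to compute $\E[\chi_J(xx')]$ directly by conditioning on which of Step 3a or Step 3b is performed for the index $i$ to which $J$ belongs. First I would observe that for a fixed $i\in[k]$, the pair of restrictions $(x|_{\pi_{vu}^{-1}(i)}, x'|_{\pi_{vu}^{-1}(i)})$ is generated in one of two mutually exclusive ways, each with probability $\tfrac12$: in Step 3a we set $x'_j = -x_j$ deterministically for all $j\in\pi_{vu}^{-1}(i)$; in Step 3b we set $x'_j$ either equal to $x_j$ (with probability $1-\eps$, via Step 3b.1) or to an independent uniform bit (with probability $\eps$, via Step 3b.2). Since $J\subseteq\pi_{vu}^{-1}(i)$, the random variable $\chi_J(xx') = \prod_{j\in J} x_j x'_j$ depends only on this block, so I can evaluate the expectation blockwise.

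Next I would evaluate each of the three cases. In Step 3a, $x_j x'_j = -x_j^2 = -1$ for every $j$, so $\chi_J(xx') = (-1)^{|J|}$. In Step 3b.1, $x_j x'_j = x_j^2 = 1$ for every $j$, so $\chi_J(xx') = 1$. In Step 3b.2, each $x'_j$ is independent uniform, so $\E[x_j x'_j] = \E[x_j]\E[x'_j] = 0$, and since $J$ is non-empty, $\E[\chi_J(xx')] = 0$ by independence across $j\in J$. Combining with the branching probabilities, I get
\[
\E[\chi_J(xx')] = \tfrac12 (-1)^{|J|} + \tfrac12\bigl[(1-\eps)\cdot 1 + \eps\cdot 0\bigr] = \tfrac12(-1)^{|J|} + \tfrac{1-\eps}{2}.
\]
When $|J|$ is even this is $\tfrac12 + \tfrac{1-\eps}{2} = 1 - \tfrac{\eps}{2}$; when $|J|$ is odd this is $-\tfrac12 + \tfrac{1-\eps}{2} = -\tfrac{\eps}{2}$, which is exactly the claimed formula. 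The statement for $\pi_{wu}$ with $(y,y')$ is obtained by the symmetric argument, swapping the roles of Steps 3a and 3b.

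There is no real obstacle here; the only point requiring a little care is the bookkeeping of which step sets $x'$ versus $y'$ and with which correlation — in Step 3a it is $x'$ that is the negation of $x$ while $y'$ gets the noisy treatment, and in Step 3b the roles are reversed. For the lemma about $J\subseteq\pi_{vu}^{-1}(i)$, only the behavior of $x'$ in Steps 3a, 3b.1, 3b.2 matters, and one must note that the two top-level branches 3a and 3b are chosen with probability $\tfrac12$ each, so that conditioned on the $v$-side, $x'_j = -x_j$ exactly when branch 3a is taken. I would also remark that the result is independent of the choice of $v, w$ (beyond determining $\pi_{vu}$), consistent with the phrasing "Fix the choice of $v$ and $w$."
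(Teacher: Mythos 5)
Your proof is correct and follows essentially the same route as the paper's: you condition on the verifier's branching for the block $\pi_{vu}^{-1}(i)$ (Step 3a vs.\ Steps 3b.1, 3b.2), observe that $x'|_{\pi_{vu}^{-1}(i)}$ is $-x$, $x$, or uniform with probabilities $\tfrac12$, $\tfrac{1-\eps}{2}$, $\tfrac{\eps}{2}$, and evaluate $\chi_J(xx')$ in each case. The paper just states this conditional distribution and concludes ``the lemma follows''; your version spells out the arithmetic, but the argument is identical.
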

\begin{proof} Note that in the choice of the verifier
$x'|_{\pi_{vu}^{-1}(i)}$ is chosen to be $-x|_{\pi_{vu}^{-1}(i)}$ with
probability $\frac{1}{2}$, $x|_{\pi_{vu}^{-1}(i)}$ with probability
$\frac{1-\eps}{2}$, and u.a.r with probability $\frac{\eps}{2}$. The
lemma follows, and holds analogously for $\pi_{wu}$ with $y$ and $y'$.
\end{proof}
The above immediately implies the following lemma,
\begin{lemma}\label{lem-4ss-oddeven}
Let $\alpha \subseteq [m]$, and $r = \left|\pi_{vu}(\alpha)\right|$
and $r' = \left|\pi_{vu}^{{\sf odd}}(\alpha)\right|$ (as per the
definition in Lemma \ref{lem-gamma}). Then,
\begin{equation}
\E\left[\chi_\alpha(xx')\right] =
\left(1-\frac{\eps}{2}\right)^{r-r'}\left(-\frac{\eps}{2}\right)^{r'}.
\end{equation}
Similarly, for $\beta \subseteq [m]$, $r =
\left|\pi_{wu}(\beta)\right|$ and $r' = \left|\pi_{wu}^{{\sf
odd}}(\beta)\right|$,
\begin{equation}
\E\left[\chi_\beta(yy')\right] =
\left(1-\frac{\eps}{2}\right)^{r-r'}\left(-\frac{\eps}{2}\right)^{r'}.
\end{equation}
\end{lemma}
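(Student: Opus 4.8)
The plan is to deduce the statement directly from Lemma \ref{lem-4ss-J}, using the fact that in Step 3 of the verifier the coordinates are processed block by block, the blocks being $\{\pi_{vu}^{-1}(i)\}_{i\in[k]}$, with independent randomness for distinct blocks. First I would write the character as a product over blocks. Since $\alpha$ is the disjoint union $\bigsqcup_{i\in[k]}\bigl(\alpha\cap\pi_{vu}^{-1}(i)\bigr)$, we have
$$\chi_\alpha(xx') \;=\; \prod_{i\in[k]}\chi_{\alpha\cap\pi_{vu}^{-1}(i)}(xx'),$$
and for $i\notin\pi_{vu}(\alpha)$ the block $\alpha\cap\pi_{vu}^{-1}(i)$ is empty, so that factor is identically $1$.

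Next I would observe that, by the description of Step 3, the pairs $\bigl(x|_{\pi_{vu}^{-1}(i)},\,x'|_{\pi_{vu}^{-1}(i)}\bigr)$ for distinct $i$ are jointly independent: Step 2 samples $x$ coordinate-wise independently and uniformly, and for each $i$ the choice between Step 3a and Step 3b (and the subsequent sub-branch) is made independently. Moreover the per-block law of $x'|_{\pi_{vu}^{-1}(i)}$ given $x$ is exactly the one used in the proof of Lemma \ref{lem-4ss-J} ($-x$ with probability $\tfrac12$, $x$ with probability $\tfrac{1-\eps}{2}$, uniform with probability $\tfrac{\eps}{2}$). Hence the expectation factorizes over the relevant blocks:
$$\E\bigl[\chi_\alpha(xx')\bigr] \;=\; \prod_{i\in\pi_{vu}(\alpha)}\E\bigl[\chi_{\alpha\cap\pi_{vu}^{-1}(i)}(xx')\bigr].$$
By Lemma \ref{lem-4ss-J}, the factor for block $i$ equals $1-\tfrac{\eps}{2}$ when $|\alpha\cap\pi_{vu}^{-1}(i)|$ is even and $-\tfrac{\eps}{2}$ when it is odd. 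By the definition of $\pi_{vu}^{{\sf odd}}$ in Lemma \ref{lem-gamma}, exactly $r'=|\pi_{vu}^{{\sf odd}}(\alpha)|$ of the $r=|\pi_{vu}(\alpha)|$ blocks are of the odd type, so the product is $\bigl(1-\tfrac{\eps}{2}\bigr)^{r-r'}\bigl(-\tfrac{\eps}{2}\bigr)^{r'}$, as claimed. The statement for $\beta$, $\pi_{wu}$, $y$, $y'$ follows by the identical argument using the symmetric half of Lemma \ref{lem-4ss-J}.

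There is essentially no serious obstacle here — this is the ``immediately implies'' step advertised before the lemma. The only point deserving a sentence of care is the justification of the factorization of the expectation over blocks, i.e.\ that the per-block pairs are jointly independent, which is immediate from the block-by-block structure of Step 3 together with the coordinate-wise independence of the uniform $x$ chosen in Step 2.
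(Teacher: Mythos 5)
Your proof is correct and matches the paper's intent: the paper simply states that Lemma \ref{lem-4ss-J} ``immediately implies'' this lemma, and your block-wise factorization of $\chi_\alpha(xx')$ using the independence across blocks $\pi_{vu}^{-1}(i)$ followed by a per-block application of Lemma \ref{lem-4ss-J} is exactly the intended computation. The one point you flag — joint independence of the per-block pairs — is indeed the only thing requiring any justification, and your argument for it is right.
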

We are now ready to bound the terms $\E\left[A(x)A(x')\right]$ and
$\E\left[B(y)B(y')\right]$, where the choice of $u$ is fixed. 
\begin{lemma}\label{lem-4ss-xx}
$\E_{v,x,x'}\left[A(x)A(x')\right] = \E_{w,y,y'}\left[B(y)B(y')\right] \geq
p_u^2 -\eps/2.$
\end{lemma}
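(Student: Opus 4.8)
The plan is to compute $\E_{v,x,x'}[A(x)A(x')]$ via its Fourier expansion and show that, after dropping the cross-term contributions between distinct Fourier levels, the remaining diagonal sum is bounded below by $p_u^2 - \eps/2$. First I would write, using the Fourier expansion of $A = A^v$ over $\mc{H}^v$ and the fact that $x_j$ is independent of $x'_{j'}$ for $j \neq j'$,
\begin{equation*}
\E_{x,x'}\left[A(x)A(x')\right] = \sum_{\alpha \subseteq [m]} \wh{A}_\alpha^2\, \E\left[\chi_\alpha(xx')\right],
\end{equation*}
where the expectation on the right is governed by Step 3 of the verifier (with $v$, hence $\pi_{vu}$, fixed). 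By Lemma \ref{lem-4ss-oddeven}, $\E[\chi_\alpha(xx')] = (1-\eps/2)^{r-r'}(-\eps/2)^{r'}$ where $r = |\pi_{vu}(\alpha)|$ and $r' = |\pi_{vu}^{\sf odd}(\alpha)|$.

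Next I would separate the $\alpha = \emptyset$ term, which contributes $\wh{A}_\emptyset^2 = (\E_{x}[A(x)])^2$, from the rest. Taking the expectation over a random neighbor $v$ of $u$ and applying Cauchy–Schwarz (or Jensen) gives $\E_v[\wh{A^v}_\emptyset^2] \geq (\E_v[\wh{A^v}_\emptyset])^2 = p_u^2$. For the remaining terms with $\alpha \neq \emptyset$, the key observation is that $|\E[\chi_\alpha(xx')]| \leq 1$ always, but more usefully: when $r' = 0$ (all blocks even) the factor is $(1-\eps/2)^r \geq 0$, so these terms are nonnegative; when $r' \geq 1$ the factor has absolute value at most $\eps/2$. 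So the potentially negative contribution is confined to terms where some block of $\alpha$ has odd intersection, and each such term's coefficient is bounded in absolute value by $\eps/2$. Hence
\begin{equation*}
\sum_{\alpha \neq \emptyset} \wh{A}_\alpha^2\, \E\left[\chi_\alpha(xx')\right] \geq -\frac{\eps}{2}\sum_{\alpha \neq \emptyset} \wh{A}_\alpha^2 \geq -\frac{\eps}{2},
\end{equation*}
using Parseval's identity and that $A$ is $\{-1,1\}$-valued so $\sum_\alpha \wh{A}_\alpha^2 = 1$. Combining with the $\alpha = \emptyset$ bound and taking expectation over $v \sim u$ yields $\E_{v,x,x'}[A(x)A(x')] \geq p_u^2 - \eps/2$; the argument for $\E_{w,y,y'}[B(y)B(y')]$ is identical with $\pi_{wu}$ in place of $\pi_{vu}$.

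The main obstacle is handling the order of the two expectations correctly: the quantity $\E[\chi_\alpha(xx')]$ depends on $\pi_{vu}$, and the lower bound $p_u^2$ on $\E_v[\wh{A^v}_\emptyset^2]$ requires passing the square outside the expectation over $v$ via convexity. One must be careful that $p_u = \E_{v \sim u}[\wh{A^v}_\emptyset]$ can be negative, but since we only square it the bound $\E_v[\wh{A^v}_\emptyset^2] \geq p_u^2$ still holds. The bound $\sum_{\alpha} \wh{A}_\alpha^2 \leq 1$ holds for each fixed $v$, so the error term $\eps/2$ survives the averaging over $v$ without degradation. No further estimates are needed; the calculation is otherwise routine.
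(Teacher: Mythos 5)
Your proof is correct and follows essentially the same route as the paper's: Fourier-expand $\E_{x,x'}[A(x)A(x')]$, isolate the $\alpha = \emptyset$ term and lower bound its $v$-average by $p_u^2$ via convexity, bound every nonempty term below by $-\eps/2$ using Lemma \ref{lem-4ss-oddeven}, and finish with Parseval. The only difference is that you spell out the case analysis ($r'=0$ vs.\ $r'\geq 1$) behind the pointwise bound $\E[\chi_\alpha(xx')]\geq -\eps/2$, which the paper states as an immediate consequence of Lemma \ref{lem-4ss-oddeven}.
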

\begin{proof} Using the Fourier expansion along with standard
arguments we have,
\begin{equation}
\E_{v,x,x'}\left[A(x)A(x')\right] = \E_v\left[ \wh{A}_\emptyset^2 + 
\sum_{\alpha \neq \emptyset}
\wh{A}_\alpha^2\E\left[\chi_\alpha(xx')\right]\right] \geq
\left(\E_v\left[\wh{A}_\emptyset\right]\right)^2 + \E_v\left[
\sum_{\alpha \neq \emptyset}
\wh{A}_\alpha^2\E\left[\chi_\alpha(xx')\right]\right]. \nonumber
\end{equation}
Lemma \ref{lem-4ss-oddeven} implies that
$\E\left[\chi_\alpha(xx')\right] \geq -\eps/2$. Using Parseval's we
obtain the lemma. Also, by symmetry $\E\left[A(x)A(x')\right] 
= \E\left[B(y)B(y')\right]$. \end{proof}

Observe that $x$ and $x'$ individually are independent of the pair
$(y,y')$. Similarly, $y$ and $y'$ individually are independent of the
pair $(x,x')$. Thus, we obtain,
\begin{align}
\E[A(x)B(y)B(y')] & = \E[A(x')B(y)B(y') \nonumber \\
		& = \E[A(x)A(x')B(y)] \nonumber \\
		& =  \E[A(x)A(x')B(y')] \nonumber \\
		& = \E[A(x)]\E[B(y)B(y')] \nonumber \\
		& \geq p_u(p_u^2 - \eps/2) \nonumber \\
		& \geq p_u^3 - \eps/2. \label{eqn-4ss-cube}
\end{align}
We are left with analyzing the term
$\E\left[A(x)A(x')B(y)B(y')\right]$. Fix the choice of $v$ and $w$ for
now. The Fourier expansion along with
standard arguments (analogous to those in earlier sections) yield,
\begin{align}
 \E\left[A(x)A(x')B(x)B(y')\right] = & \sum_{\alpha,
\beta}\wh{A}_\alpha^2\wh{B}_\beta^2\E\left[\chi_\alpha(xx') 
\chi_\beta(yy')\right] \nonumber \\
= & \sum_{\substack{\alpha, \beta \\
\pi_{vu}(\alpha)\cap\pi_{wu}(\beta) = \emptyset}} 
\wh{A}_\alpha^2\wh{B}_\beta^2 \E\left[\chi_\alpha(xx')\right] 
\E\left[\chi_\beta(yy')\right] \nonumber \\ 
& \ \ \ \ \ \ \ \ \ \ + 
\sum_{\substack{\alpha, \beta \\
\pi_{vu}(\alpha)\cap\pi_{wu}(\beta) \neq \emptyset}}\wh{A}_\alpha^2\wh{B}_\beta^2\E\left[\chi_\alpha(xx') 
\chi_\beta(yy')\right]. \label{eqn-4ss-split1}
\end{align}
It is easy to see that Lemma \ref{lem-4ss-oddeven} implies that
$\E\left[\chi_\alpha(xx')\right]\E\left[\chi_\beta(yy')\right] \geq
-\eps/2$. Thus, the first summation in the RHS of Equation
\eqref{eqn-4ss-split1} is at least $
\wh{A}_\emptyset^2\wh{B}_\emptyset^2 - (\eps/2)\left(\sum_\alpha
\wh{A}_\alpha^2\right)\left(\sum_\beta \wh{B}_\beta^2\right) =
\wh{A}_\emptyset^2\wh{B}_\emptyset^2
-\eps/2$. Using this we obtain,
\begin{equation}
\E\left[A(x)A(x')B(x)B(y')\right] \geq
\wh{A}_\emptyset^2\wh{B}_\emptyset^2 +
\sum_{\substack{\alpha, \beta \\
\pi_{vu}(\alpha)\cap\pi_{wu}(\beta) \neq \emptyset}}
\wh{A}_\alpha^2\wh{B}_\beta^2\E\left[\chi_\alpha(xx') 
\chi_\beta(yy')\right]\ \  - \frac{\eps}{2}. \nonumber
\end{equation}
Taking a further expectation over $v$ and $w$ and applying Jensen's
inequality 
we obtain,
\begin{equation}
\E_{\substack{v,w \\ x,x' \\ y,y'}}\left[A(x)A(x')B(x)B(y')\right] \geq 
p_u^4 + \E_{v,w}\left[\sum_{\substack{\alpha, \beta \\
\pi_{vu}(\alpha)\cap\pi_{wu}(\beta) \neq \emptyset}}
\wh{A}_\alpha^2\wh{B}_\beta^2\E\left[\chi_\alpha(xx') 
\chi_\beta(yy')\right]\right] - \frac{\eps}{2}. \label{eqn-4ss-4pw}
\end{equation}
Combining the above inequality with our assumption on the probability
of rejection of the verifier, along with Equations
\eqref{eqn-4ss-one},
\eqref{eqn-4ss-cross}, \eqref{eqn-4ss-cube}, and Lemma \ref{lem-4ss-xx}, 
yields,
\begin{align}
\rho^4 - \delta \geq & \frac{1}{16}\E_u\left[4p_u + 4p_u^2 + 2p_u^2 -
\eps + 4p_u^3 - 2\eps + p_u^4 - \frac{\eps}{2}\right]
\nonumber \\
 & + \frac{1}{16}\E_{u,v,w}\left[\sum_{\substack{\alpha, \beta \\
\pi_{vu}(\alpha)\cap\pi_{wu}(\beta) \neq \emptyset}}
\wh{A}_\alpha^2\wh{B}_\beta^2\E\left[\chi_\alpha(xx') 
\chi_\beta(yy')\right]\right] \nonumber \\
\geq & \frac{1}{16}\E_u\left[(1+p_u)^4 -4\eps\right] + 
 \frac{1}{16}\E_{u,v,w}\left[\sum_{\substack{\alpha, \beta \\
\pi_{vu}(\alpha)\cap\pi_{wu}(\beta) \neq \emptyset}}
\wh{A}_\alpha^2\wh{B}_\beta^2\E\left[\chi_\alpha(xx') 
\chi_\beta(yy')\right]\right] \nonumber \\
\geq & \frac{1}{16}\left[\left(1+\E_u[p_u]\right)^4 - 4\eps\right] + 
\frac{1}{16}\E_{u,v,w}\left[\sum_{\substack{\alpha, \beta \\
\pi_{vu}(\alpha)\cap\pi_{wu}(\beta) \neq \emptyset}}
\wh{A}_\alpha^2\wh{B}_\beta^2\E\left[\chi_\alpha(xx') 
\chi_\beta(yy')\right]\right],
\end{align}
where Jensen's inequality is used to obtain the last inequality.
Substituting the value $\rho$ from Equation \eqref{eqn-rho-frac} in the above and simplifying we obtain,
\begin{equation}
\E_{u,v,w}\left[\sum_{\substack{\alpha, \beta \\
\pi_{vu}(\alpha)\cap\pi_{wu}(\beta) \neq \emptyset}}
\wh{A}_\alpha^2\wh{B}_\beta^2\E\left[\chi_\alpha(xx') 
\chi_\beta(yy')\right]\right] \leq -16\delta + 4\eps,
\label{eqn-4ss-bd}
\end{equation}
where the inner expectation is over the choice of $x, x', y$ and $y'$. 
Before proceeding, we need the following lemma which follows from the
way $x, x', y, y'$ are chosen by the verifier.
\begin{lemma}\label{lem-4ss-JK}
For $i \in [k]$, let $J\subseteq \pi_{vu}^{-1}(i)$ and $K \subseteq
\pi_{wu}^{-1}(i)$ be non-empty subsets. Then,
\begin{equation}
\E\left[\chi_J(xx')\chi_K(yy')\right] = \begin{cases}
					(1-\eps) &\tn{ if both }|J|,
|K|\tn{ even,} \\
					-(1-\eps) &\tn{ if both }|J|,
|K|\tn{ odd,} \\
0 &\tn{otherwise.}
					\end{cases}
\nonumber
\end{equation}
\end{lemma}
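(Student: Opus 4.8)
The plan is to condition on which of Steps 3a and 3b the verifier executes for the index class $i$ — each occurring with probability $\frac12$ — and to exploit the fact that in each case exactly one of $\chi_J(xx')$, $\chi_K(yy')$ becomes deterministic while the other has mean determined by a single $\eps$-weighted coin.

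First I would treat Step 3a. There $x'_j = -x_j$ for every $j \in \pi_{vu}^{-1}(i) \supseteq J$, so $x_j x'_j = -1$ and hence $\chi_J(xx') = (-1)^{|J|}$ deterministically. The factor $\chi_K(yy')$ then splits according to the inner coin: with probability $1-\eps$ (Step 3a.1) one has $y'_j = y_j$ for all $j \in \pi_{wu}^{-1}(i) \supseteq K$, so $\chi_K(yy') = 1$; with probability $\eps$ (Step 3a.2) each $y'_j$ with $j \in \pi_{wu}^{-1}(i)$ is an independent fresh uniform sign, so the expectation of $\chi_K(yy')$ over these signs factors as $\prod_{j\in K}\E[y_j]\E[y'_j] = 0$ since $K \neq \emptyset$. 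Hence $\E[\chi_J(xx')\chi_K(yy')\mid \text{3a}] = (1-\eps)(-1)^{|J|}$, and by the symmetric argument (swapping the roles of $v,w$ and of $J,K$) one gets $\E[\chi_J(xx')\chi_K(yy')\mid \text{3b}] = (1-\eps)(-1)^{|K|}$.

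Averaging the two cases gives
$$\E\left[\chi_J(xx')\chi_K(yy')\right] = \frac{1-\eps}{2}\left((-1)^{|J|} + (-1)^{|K|}\right),$$
and reading off the three parity cases — both $|J|,|K|$ even, both odd, and mixed — yields the values $1-\eps$, $-(1-\eps)$, and $0$ respectively. This is entirely routine; the only point deserving a word of care is that the fresh re-randomizations in Steps 3a.2 and 3b.2 are independent of all of $x$ and all of $y$, so the corresponding expectations factor over coordinates and vanish exactly when $K$ (resp. $J$) is non-empty — which is where the non-emptiness hypotheses are used. I do not anticipate any real obstacle.
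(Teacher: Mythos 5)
Your proof is correct, and it is the natural (essentially the only) argument: condition on the branch 3a versus 3b, observe that one of $\chi_J(xx')$, $\chi_K(yy')$ becomes the deterministic sign $(-1)^{|J|}$ or $(-1)^{|K|}$ while the other has conditional mean $1-\eps$ (the fresh re-randomization kills the expectation precisely because $J$, $K$ are non-empty), then average. The paper states this lemma without proof, asserting it "follows from the way $x, x', y, y'$ are chosen by the verifier," and your computation is exactly the intended justification.
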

Combining the above lemma with Lemma \ref{lem-4ss-J}, we obtain that,
\begin{align}
\left|\E\left[\chi_\alpha(xx')\chi_\beta(yy')\right]\right| \leq &
\left[\left(1 - \frac{\eps}{2}\right)^{\left|(\pi_{vu}(\alpha)\cup
\pi_{wu}(\beta)\setminus
(\pi_{vu}(\alpha)\cap\pi_{wu}(\beta))\right|}\right]\cdot\left[
\left(1-\eps\right)^{\left|\pi_{vu}(\alpha)\cap\pi_{wu}(\beta)\right|}\right]
\nonumber \\ \leq &  
\left(1 - \frac{\eps}{2}\right)^{\max\{\left|\pi_{vu}(\alpha)\right|,
\left|\pi_{wu}(\beta)\right|\}}.\label{eqn-4ss-albeta}
\end{align}
Let $R$ and $T$ ($R \geq T$) be positive integers we shall fix later. Using the
above we have,
\begin{align}
\sum_{\substack{\alpha, \beta \\
\pi_{vu}(\alpha)\cap\pi_{wu}(\beta) \neq \emptyset}}
\wh{A}_\alpha^2\wh{B}_\beta^2\left|\E\left[\chi_\alpha(xx') 
\chi_\beta(yy')\right]\right|
\leq & 
\sum_{\substack{|\alpha| < R,|\beta| < R \\ \pi_{vu}(\alpha)\cap\pi_{wu}(\beta)
\neq
\emptyset}}\wh{A}_{\alpha}^2\wh{B}_{\beta}^2
\nonumber \\  & +
\sum_{\substack{\big[ (|\alpha| \geq R, |\pi_{vu}(\alpha)| < T) \\ \vee 
(|\beta| \geq R, |\pi_{wu}(\beta)| < T)\big], \\ \pi_{vu}(\alpha)\cap\pi_{wu}(\beta)
\neq
\emptyset}}\wh{A}_{\alpha}^2\wh{B}_{\beta}^2 \nonumber \\
& + 
\sum_{\substack{\big[ (|\alpha| \geq R, |\pi_{vu}(\alpha)| \geq T) \\ \vee 
(|\beta| \geq R, |\pi_{wu}(\beta)| \geq T)\big], \\ \pi_{vu}(\alpha)\cap\pi_{wu}(\beta)
\neq
\emptyset}}\wh{A}_{\alpha}^2\wh{B}_{\beta}^2\left(1 -
\frac{\eps}{2}\right)^T
\end{align}
By Parseval's, the second term in the RHS above is at most,
$$\sum_{\substack{|\alpha| \geq R,\\ \pi_{vu}(\alpha) <
T}}\wh{A}_\alpha^2 + \sum_{\substack{|\beta| \geq R,\\ \pi_{wu}(\beta) <
T}}\wh{B}_\beta^2,$$
and the third term is at most,
$$\left(1 - \frac{\eps}{2}\right)^{T}.$$
We set $T = R^{c_0}$ where $c_0$ is the constant from Theorem
\ref{thm-LC}, and using the above analysis and Equation
\eqref{eqn-pilargebd}, we obtain,
\begin{align}
\left|\E_{u,v,w}\left[\sum_{\substack{\alpha, \beta \\
\pi_{vu}(\alpha)\cap\pi_{wu}(\beta) \neq \emptyset}}
\wh{A}_\alpha^2\wh{B}_\beta^2\E\left[\chi_\alpha(xx') 
\chi_\beta(yy')\right]\right]\right| \ \leq\  & 
\E_{u,v,w}\left[\sum_{\substack{|\alpha| < R,|\beta| < R 
\\ \pi_{vu}(\alpha)\cap\pi_{wu}(\beta)
\neq \emptyset}}\wh{A}_{\alpha}^2\wh{B}_{\beta}^2\right] \nonumber \\ & +
\frac{2}{R^{c_0}} \nonumber \\ & + \left(1 - \frac{\eps}{2}\right)^{R^{c_0}}.
\end{align}
Let us set $R =
\left(\frac{2}{\eps}\log\left(\frac{1}{\eps}\right)\right)^{\frac{1}{c_0}}$
and $\eps = \delta$. Using the above equation in conjunction with
Equation \eqref{eqn-4ss-bd} yields,
$$\E_{u,v,w}\left[\sum_{\substack{|\alpha| < R,|\beta| < R 
\\ \pi_{vu}(\alpha)\cap\pi_{wu}(\beta)
\neq \emptyset}}\wh{A}_{\alpha}^2\wh{B}_{\beta}^2\right] \geq
10\delta.$$
This yields a randomized labeling as follows: 
for every vertex $v \in V$, choose $\alpha
\subseteq [m]$ with probability $\wh{A^v}_\alpha^2$ and select a
random label from $\alpha$. For a vertex $u \in U$, choose a random
neighbor $w$ of $u$ and assign $u$ the label $\pi_{wu}(j_w)$ to $u$
where $j_w$ is the label assigned to $w$. The expected fraction of edges
of the $\mc{L}$ satisfied by this labeling is,
$$(10\delta)\left(\frac{1}{R}\right)^2 = \Omega(\delta^{c'}),$$
for some constant $c' > 0$ depending on $c_0$.

\subsubsection{Choice of parameters}
Analogous to previous sections, choosing $r = (\log \log N)/4$ in 
Theorem \ref{thm-LC} we get that the reduction to {\sc Max-E$4$-Set-Splitting} 
is of size $n = N^{O(r)}2^{2^{3r}} \leq
N^{O(\log\log N)}$. The soundness of $\mc{L}$ is $2^{-\Omega(\log\log
N)} = 2^{-\Omega(\log\log n)}$. Combining this with the above analysis in
the NO Case, choosing $\delta = \frac{1}{(\log n)^c}$ for some
positive constant $c$ (depending on $c_0$ and $\gamma_0$) we obtain a
contradiction to our assumption on the probability of rejection of the
verifier. 

Thus, in the NO Case, the verifier rejects with probability at least
$\rho^4 - \frac{1}{(\log n)^c}$. This completes the proof of Theorem
\ref{thm-main4}.

\bibliographystyle{alpha}
\bibliography{Refs-4pcp}

\end{document}